\def\ess{{\mathrm{ess}}}
\def\rL{\mathcal{L}}
\def\R{\mathbb{R}}
\def\rP{\mathbb{P}}
\def\supp{\mathop{\rm supp}}
\def\Proj{\mathop{\rm Proj}}
\def\B{{\mathcal B}}
\def\P{{\mathcal P}}
\def\M{{\mathcal M}}
\def\bu{{\bf u}}
\def\by{{\bf y}}
\def\sX{{\mathsf X}}
\def\sY{{\mathsf Y}}
\def\sE{{\mathsf E}}
\def\sU{{\mathsf U}}
\def\sg{{\mathsf g}}
\theoremstyle{remark}
\newtheorem{definition}{Definition}
\newtheorem{theorem}{Theorem}
\newtheorem{proposition}{Proposition}
\newtheorem{lemma}{Lemma}
\theoremstyle{remark}
\newtheorem{remark}{Remark}
\newtheorem{assumption}{Assumption}
\begin{document}
\sloppy
\title{A topology for Team Policies and Existence of Optimal Team Policies in Stochastic Team Theory
\author{Naci Saldi
\thanks{The author is with the Department of Natural and Mathematical Sciences, Ozyegin University, Cekmekoy, Istanbul, Turkey, Email: naci.saldi@ozyegin.edu.tr}
}}
\maketitle

\begin{abstract}
In this paper, we establish the existence of team-optimal policies for static teams and a class of sequential dynamic teams. We first consider the static team problems and show the existence of optimal policies under certain regularity conditions on the observation channels by introducing a topology on the set of policies. Then we consider sequential dynamic teams and establish the existence of an optimal policy via the static reduction method of Witsenhausen. We apply our findings to the well-known counterexample of Witsenhausen and the Gaussian relay channel problem.
\end{abstract}

\section{Introduction}\label{sec1}

Team decision theory has been introduced by Marschak \cite{mar55} to study the behaviour of a group of agents who act collectively in a decentralized fashion in order to optimize a common cost function. Radner \cite{rad62} established fundamental results for static teams, and in particular, demonstrated connections between person-by-person optimality and team-optimality. Witsenhausen's seminal papers \cite{wit71,wit75,wit88,WitsenStandard,WitsenhausenSIAM71,Wit68} on dynamic teams and characterization and classification of information structures have been crucial in the progress of our understanding of dynamic teams. Particularly, well-known counterexample of Witsenhausen \cite{Wit68} demonstrated the challenges that arise due to a decentralized information structure in such models. We refer the reader to \cite{YukselBasarBook} for a more comprehensive overview of team decision theory and a detailed literature review.

The key underlying difference of team decision problems from the classical (centralized) decision problems is the decentralized nature of the information structure; that is, agents cannot share all the information they have with other agents. This decentralized nature of the information structure prevents one to use classical tools in centralized decision theory such as dynamic programming, convex analytic methods, and linear programming. For this reason, establishing the existence and structure of optimal policies is a quite challenging problem in team decision theory.

In this paper, our aim is to study the existence of an optimal policy for team decision problems. In particular, we are interested in sequential team models.

In the literature relatively few results are available on the existence of team optimal solutions. Indeed, so far, existence of optimal policies for static teams and a class of sequential dynamic teams has been studied recently in \cite{GuYuBaLa15,YuSa17}. In these papers, existence of team optimal policies is established via strategic measure approach where strategic measures are the probability measures induced by policies on the product of state space, observation spaces, and action spaces. In this approach, one first identifies a topology on the set of strategic measures and then proves the relative compactness of this set along with the lower semi-continuity of the cost function. If the set of strategic measures are closed,
then one can show the existence of optimal policy via Weierstrass Extreme Value Theorem. However, to establish the closeness of the set of strategic measures, one needs somewhat strong assumptions on the observation channels. For instance, conditions imposed in \cite[Assumption 3.1]{GuYuBaLa15} to establish the closeness of the set of strategic measures with respect to the weak topology implies that observation and their reverse channels are \emph{uniformly} continuous with respect to the total variation distance. The reason for imposing such a strong condition on the observation channel is that convergence with respect to the topology defined on the set of strategic measures does not in general preserve the information structure of the problem (see, e.g., \cite[Theorem 2.7]{YuSa17}).

In this paper, we prove the existence of team optimal policies under the assumption that the observation channels are continuous with respect to the total variation distance and we did not put any restriction on the reverse channels. Unlike strategic measure approach, we introduce a topology on the set of policies, inspired by the topology introduced in \cite[Section 2.4]{BoArGh12}, instead of the set of strategic measures. In this way, we can preserve the information structure of the problem under the convergence of this topology. We first establish the result for static teams. Then, using static reduction of Witsenhausen, we consider the sequential dynamic teams and prove the existence of team optimal solutions using the result in static case. We then apply our findings to counterexample of Witsenhausen and Gaussian relay channel problem.

The rest of the paper is organized as follows. In Section~\ref{sub1sec1} we review the definition of Witsenhausen's \emph{intrinsic model} for sequential team problems. In Section~\ref{sec2} we prove the existence of team optimal solutions for static team problems. In Section~\ref{sec3} we consider the existence of an optimal policy for dynamic team
problems via the static reduction method. In Sections~\ref{witsen} and \ref{gaussrelay} we apply the results derived in Section~\ref{sec3} to study the existence of optimal policies for Witsenhausen's counterexample and the Gaussian relay channel. Section~\ref{conc} concludes the paper.

\subsection{Notation and Conventions}

For a metric space $\sE$, the Borel $\sigma$-algebra (the smallest $\sigma$-algebra that contains the open sets of $\sE$) is denoted by $\cal{E}$.  We let $C_0(\sE)$ and $C_c(\sE)$ denote the set of all continuous real functions on $\sE$ vanishing at infinity and the set of all continuous real functions on $\sE$ with compact support, respectively. For any $g \in C_c(\sE)$, let $\supp(g)$ denote its support. Let $\M(\sE)$ and $\P(\sE)$ denote the set of all finite signed measures and probability measures on $\sE$, respectively. A sequence $\{\mu_n\}$ of finite signed measures on $\sE$ is said to converge with respect to total variation distance (see \cite{HeLa03}) to a finite signed measure $\mu$ if $ \lim_{n\rightarrow\infty} 2\sup_{D \in \cal{E}} |\mu_n(D) - \mu(D)|=0$. A sequence $\{\mu_n\}$ of finite signed measures on $\sE$ is said to converge weakly (see \cite{HeLa03}) to a finite signed measure $\mu$ if $\int_{\sE} g d\mu_n \rightarrow \int_{\sE} g d\mu$ for all bounded and continuous real function $g$ on $\sE$. Let $\sE_1$ and $\sE_2$ be two metric spaces. For any $\mu \in \M(\sE_1 \times \sE_2)$, we denote by $\Proj_{\sE_1}(\mu)(\,\cdot\,) \coloneqq \mu(\,\cdot\,\times \sE_2)$ the marginal of $\mu$ on $\sE_1$. Let $\sE = \prod_{i=1}^N \sE_i$ be a finite product space. For each $j,k = 1,\ldots,N$ with $k < j$, we denote $\sE^{^{[k:j]}} = \prod_{i=k}^j \sE_i$ and $\sE_{-j} = \prod_{i \neq j} \sE_i$. A similar convention also applies to elements of these sets which will be denoted by bold lower case letters. For any set $D$, let $D^c$ denote its complement. Unless otherwise specified, the term `measurable' will refer to Borel measurability in the rest of the paper.

\section{Intrinsic Model for Sequential teams}\label{sub1sec1}



Witsenhausen's {\it intrinsic model} \cite{wit75} for sequential team problems has the following components:
\begin{align}
\bigl\{ (\sX, {\cal X}), \rP, (\sU_i,{\cal U}_i), (\sY_i,{\cal Y}_i), i=1,\ldots,N,\bigr\} \nonumber
\end{align}
where Borel spaces (i.e., Borel subsets of complete and separable metric spaces) $\sX$, $\sU_i$, and $\sY_i$ ($i=1,\ldots,N$) endowed with Borel $\sigma$-algebras denote the state space, and action and observation spaces of Agent~$i$, respectively. Here $N$ is the number of actions taken, and each of these actions is supposed to be taken by an individual agent (hence, an agent with perfect recall can also be regarded as a separate decision maker every time it acts). For each $i$, the observations and actions of Agent~$i$ is denoted by $u_i$ and $y_i$, respectively. The $\sY_i$-valued observation variable for Agent~$i$ is given by $y_i \sim W_i(\,\cdot\,|x,{\bf u}^{^{[1:i-1]}})$, where $W_i$ is a stochastic kernel on $\sY_i$ given $\sX\times \sU^{^{[1:i-1]}}$ \cite[Definition C.1]{HeLa96}. A {\it probability measure} $\rP$ on $(\sX, {\cal X})$ describes the uncertainty on the state variable $x$.

A control strategy $\underline{\gamma}= (\gamma_1, \gamma_2, \dots, \gamma_N)$, also called {\it policy}, is an $N$-tuple of measurable functions such that $u_i = \gamma_i(y_i)$, where $\gamma_i$ is a measurable function from $\sY_i$ to $\sU_i$. Let $\Gamma_i$ denote the set of all admissible policies for Agent~$i$; that is, the set of all measurable functions from $\sY_i$ to $\sU_i$ and let ${\bf \Gamma} = \prod_{k} \Gamma_k$.

We note that the intrinsic model of Witsenhausen uses a set-theoretic characterization; however, for Borel spaces, the model above is equivalent to the intrinsic model for sequential team problems.

Under this intrinsic model, a sequential team problem is {\it dynamic} if the information $y_i$ available to at least one agent~$i$ is affected by the action of at least one other agent~$k\neq i$. A decentralized problem is {\it static}, if the information available at every decision maker is only affected by state of the nature; that is, no other decision maker can affect the information at any given decision maker.


For any $\underline{\gamma} = (\gamma_1, \cdots, \gamma_N)$, we let the
(expected) cost of the team problem be defined by
\[J(\underline{\gamma}) \coloneqq E[c(x,{\bf y},{\bf u})],\]
for some measurable cost function $c: \sX \times \prod_i \sY_i \times \prod_i \sU_i \to [0,\infty)$, where ${\bf u} \coloneqq (u_1,\ldots,u_N) = \underline{\gamma}({\bf y})$ and ${\bf y} \coloneqq (y_1,\ldots,y_N)$.

\begin{definition}\label{Def:TB1}
For a given stochastic team problem, a policy (strategy)
${\underline \gamma}^*:=({\gamma_1}^*,\ldots, {\gamma_N}^*)\in {\bf \Gamma}$ is
an {\it optimal team decision rule} if
\begin{equation}
J({\underline \gamma}^*)=\inf_{{{\underline \gamma}}\in {{\bf \Gamma}}}
J({{\underline \gamma}})=:J^*. \nonumber
\end{equation}
The cost level $J^*$ achieved by this strategy is the {\it optimal team cost}.
\end{definition}

In what follows, the terms \emph{policy}, \emph{measurement}, and \emph{agent} are used synonymously with \emph{strategy}, \emph{observation}, and \emph{decision maker}, respectively.

\subsection{Auxiliary Results}

To make the paper as self-contained as possible, in this section  we review some results in probability theory and functional analysis that will be used in the paper.

The first result is Prokhorov's theorem which gives a sufficient condition for relative compactness in weak topology.

\begin{theorem}{(\cite[Theorem E.6]{HeLa96})}\label{aux1}
A set $\M$ of probability measures on a Borel space $\sE$ is relatively compact with respect to the weak topology if it is tight; that is, for any $\varepsilon>0$ there exists a compact subset $K$ of $\sE$ such that for all $\mu \in \M$, we have $\mu(K) > 1-\varepsilon$, or equivalently, $\mu(K^c) < \varepsilon$.
\end{theorem}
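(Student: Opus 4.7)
The plan is to reduce relative compactness to sequential compactness and then produce a weakly convergent subsequence of any $\{\mu_n\} \subset \M$ via a diagonal argument. Since $\sE$ is a Borel subset of a complete separable metric space, $\P(\sE)$ under the weak topology is metrizable (for instance by the L\'evy--Prokhorov metric), so relative compactness of $\M$ is equivalent to every sequence in $\M$ having a weakly convergent subsequence.

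Given $\{\mu_n\} \subset \M$, I would invoke tightness to choose, for each integer $j \geq 1$, a compact $K_j \subset \sE$ with $\mu(K_j^c) < 1/j$ for all $\mu \in \M$; after replacing $K_j$ by $\bigcup_{i \leq j} K_i$ the sets are nested. Each $C(K_j)$ is a separable Banach space, so the closed unit ball of $C(K_j)^*$ is sequentially weak-$*$ compact by Banach--Alaoglu. Viewing each restriction $\mu_n|_{K_j}$ as an element of that ball and performing a Cantor diagonal extraction over $j$ yields a subsequence $\{\mu_{n_k}\}$ such that $\int_{K_j} f \, d\mu_{n_k}$ converges for every $f \in C(K_j)$ and every $j$; the Riesz representation theorem then produces a positive measure $\nu_j$ on $K_j$ realizing each limit. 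I would glue the $\nu_j$ into a single Borel measure $\mu$ on $\sE$ using nestedness of $K_j$ and the uniform mass bound $\nu_j(K_j) \geq 1 - 1/j$, and appeal to tightness once more to force $\mu(\sE) = 1$. To verify weak convergence against an arbitrary $f \in C_b(\sE)$, I would split $\int f \, d\mu_{n_k} = \int_{K_j} f \, d\mu_{n_k} + \int_{K_j^c} f \, d\mu_{n_k}$, bound the second piece by $\|f\|_\infty / j$ via tightness, and let $k \to \infty$ followed by $j \to \infty$.

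The main obstacle is the gluing step: the weak-$*$ limits $\nu_j$ live on different compact pieces, so one must verify that they are mutually consistent (making $\mu$ well-defined) and that together they retain full unit mass on $\sE$. Both points hinge on the uniform tightness bound, which is precisely what prevents the diagonal procedure from losing mass at infinity and what upgrades compactly supported weak-$*$ convergence into honest weak convergence on $\P(\sE)$.
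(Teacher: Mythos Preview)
The paper does not supply its own proof of this statement: Theorem~\ref{aux1} is quoted verbatim from \cite[Theorem~E.6]{HeLa96} as an auxiliary result and is left unproved. There is therefore nothing in the paper to compare your argument against.

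Regarding your outline itself, it follows one of the standard routes to Prokhorov's theorem and is essentially sound. Your identification of the gluing step as the delicate point is accurate: from the weak-$*$ limits $\nu_j$ on the individual compacta $K_j$ one does \emph{not} automatically get $\nu_{j'}|_{K_j}=\nu_j$ for $j<j'$, since restriction of measures is not weak-$*$ continuous (mass can accumulate on $\partial K_j$ inside $K_{j'}$). One clean way to bypass this is to avoid gluing altogether: embed $\sE$ homeomorphically into a compact metric space $\widehat{\sE}$ (e.g.\ the Hilbert cube), push the $\mu_n$ forward to measures on $\widehat{\sE}$, extract a single weak limit $\mu$ there by compactness of $\P(\widehat{\sE})$, and then use tightness of $\{\mu_n\}$ on $\sE$ together with the Portmanteau inequality $\mu(K_j)\geq\limsup_k \mu_{n_k}(K_j)\geq 1-1/j$ to conclude that $\mu$ is concentrated on $\sE$. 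This delivers the limit as an honest element of $\P(\sE)$ in one stroke, without any consistency verification between pieces.
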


\begin{proposition}{(\cite[Theorem 3.2]{Par67})}\label{aux2}
Let $\mu$ be a probability measure on a Borel space $\sE$. Then $\mu$ is tight.
\end{proposition}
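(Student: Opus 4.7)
The plan is to reduce to the case where $\sE$ is a Polish space (complete, separable metric space), and then construct a compact approximating set of large measure by exploiting separability together with completeness.

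By the paper's convention, a Borel space $\sE$ is a Borel subset of some Polish space $\sE'$. I would first invoke Kuratowski's theorem to equip $\sE$ with a finer Polish topology that generates the same Borel $\sigma$-algebra. Since any set compact in a finer topology is automatically compact in the coarser original topology, tightness in the refined topology transfers back to tightness in the given topology. So without loss of generality, I may assume $\sE$ is itself Polish with complete metric $d$ and countable dense subset $\{x_k\}_{k \geq 1}$.

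Next, fix $\varepsilon > 0$. For each $n \geq 1$, since $\sE = \bigcup_{k=1}^{\infty} B(x_k, 1/n)$, continuity of $\mu$ from below yields some $k_n$ with
\[
\mu\!\left( \bigcup_{k=1}^{k_n} B(x_k, 1/n) \right) > 1 - \varepsilon\, 2^{-n}.
\]
I would then set $A_n \coloneqq \bigcup_{k=1}^{k_n} \bar{B}(x_k, 1/n)$ and $K \coloneqq \bigcap_{n=1}^{\infty} A_n$. Being closed and, for every $n$, covered by finitely many closed balls of radius $1/n$, the set $K$ is closed and totally bounded in a complete space, hence compact. Subadditivity then gives $\mu(K^c) \leq \sum_{n=1}^{\infty} \mu(A_n^c) \leq \sum_{n=1}^{\infty} \varepsilon\, 2^{-n} = \varepsilon$, i.e.\ $\mu(K) > 1 - \varepsilon$.

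The only genuine obstacle is the reduction from the Borel-subset setting to the genuinely Polish setting, since tightness is a topological rather than measurable notion; the Kuratowski refinement argument resolves this cleanly by exploiting the fact that refinement of topology only shrinks the class of compact sets. The totally-bounded-plus-complete construction that follows is classical and entirely routine.
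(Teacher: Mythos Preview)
Your argument is correct. The paper does not actually supply a proof of this proposition; it merely records the result with a citation to Parthasarathy's monograph, so there is nothing to compare against at the level of argument. What you have written is essentially the classical Ulam construction (total boundedness intersected over shrinking scales, in a complete metric), preceded by the descriptive-set-theoretic refinement step that upgrades a Borel subset of a Polish space to a genuinely Polish space with the same Borel $\sigma$-algebra. Both ingredients are standard, and your observation that compactness in the refined topology passes down to the original one is exactly what is needed to carry tightness back. One cosmetic remark: from $\mu(K^c)\le\sum_n \mu(A_n^c)$ you obtain $\mu(K)\ge 1-\varepsilon$ rather than a strict inequality, but of course this is immaterial for tightness.
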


\begin{proposition}{(\cite[Lemma 4.4]{Vil09})}\label{aux3}
Let $\sE_1$ and $\sE_2$ be two Borel spaces. Let $F_1 \subset \P(\sE_1)$ and $F_2 \subset \P(\sE_2)$ be tight subsets of $\P(\sE_1)$ and $\P(\sE_2)$, respectively.
Then the set
\begin{align}
&F \coloneqq  \nonumber \\
&\phantom{xx}\biggl\{ \nu \in \P(\sE_1 \times \sE_2): \mathrm{Proj}_{\sE_1}(\nu) \in F_1 \text{ and } \mathrm{Proj}_{\sE_2}(\nu) \in F_2 \biggr\} \nonumber
\end{align}
is also tight.
\end{proposition}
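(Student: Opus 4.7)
The plan is to directly construct a compact set in $\sE_1 \times \sE_2$ that witnesses tightness of $F$, by taking a product of compact sets that witness tightness of $F_1$ and $F_2$ respectively.

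Fix $\varepsilon > 0$. By tightness of $F_1$, I would first choose a compact set $K_1 \subset \sE_1$ such that $\mu_1(K_1^c) < \varepsilon/2$ for every $\mu_1 \in F_1$. Analogously, by tightness of $F_2$, choose a compact $K_2 \subset \sE_2$ with $\mu_2(K_2^c) < \varepsilon/2$ for every $\mu_2 \in F_2$. Set $K \coloneqq K_1 \times K_2$; by Tychonoff's theorem (or simply the fact that a finite product of compact sets in metric spaces is compact) $K$ is a compact subset of $\sE_1 \times \sE_2$.

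Now for any $\nu \in F$, observe the set-theoretic inclusion
\begin{equation}
K^c = (K_1 \times K_2)^c \subset (K_1^c \times \sE_2) \cup (\sE_1 \times K_2^c). \nonumber
\end{equation}
Applying $\nu$, subadditivity gives $\nu(K^c) \le \nu(K_1^c \times \sE_2) + \nu(\sE_1 \times K_2^c)$. By the definition of the marginals, the right-hand side equals $\Proj_{\sE_1}(\nu)(K_1^c) + \Proj_{\sE_2}(\nu)(K_2^c)$. Since $\Proj_{\sE_1}(\nu) \in F_1$ and $\Proj_{\sE_2}(\nu) \in F_2$ by the definition of $F$, this sum is strictly less than $\varepsilon/2 + \varepsilon/2 = \varepsilon$, proving tightness of $F$.

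There is no real obstacle to this argument: it is a standard subadditivity estimate on the complement of a product of compact sets, and the only ingredients are the definition of marginals and the tightness hypothesis on $F_1$ and $F_2$. The only point worth mild care is noticing that one should cover $K^c$ by the two "slab" sets above, rather than trying to work with $K_1^c \times K_2^c$, which would give too small a set.
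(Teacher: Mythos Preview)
Your argument is correct and is in fact the standard proof of this lemma. Note that the paper does not supply its own proof of Proposition~\ref{aux3}; it simply quotes the result from \cite[Lemma 4.4]{Vil09}, whose proof proceeds exactly as you describe: one takes the product $K_1\times K_2$ of compact sets witnessing tightness of $F_1$ and $F_2$, observes $(K_1\times K_2)^c\subset (K_1^c\times\sE_2)\cup(\sE_1\times K_2^c)$, and bounds the $\nu$-measure of each slab by the corresponding marginal. There is nothing to add.
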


Before next theorem, we should give the following definition.

\begin{definition}{(\cite[Definition 4.4]{GuYuBaLa15})}\label{aux4}
Let $\sE_1$, $\sE_2$, and $\sE_3$ be Borel spaces. A non-negative measurable function $\varphi: \sE_1 \times \sE_2 \times \sE_3 \rightarrow [0,\infty)$ is in class $\mathrm{IC}(\sE_1,\sE_2)$ if for every $M > 0$ and for every compact set $K \subset \sE_1$, there exists a compact set $L \subset \sE_2$ such that
\begin{align}
\inf_{K \times L^c \times \sE_3} \varphi(e_1,e_2,e_3) \geq M. \nonumber
\end{align}
\end{definition}

Using this definition, we now state the following result.

\begin{theorem}{(\cite[Lemma 4.5]{GuYuBaLa15})}\label{aux5}
Suppose $\varphi: \sE_1 \times \sE_2 \times \sE_3 \rightarrow [0,\infty)$ is in class $\mathrm{IC}(\sE_1,\sE_2)$. Let $m>0$ and $F_1 \subset \P(\sE_1)$ be a tight set of measures. Define
\begin{align}
&F \coloneqq  \nonumber \\
&\phantom{x}\biggl\{ \nu \in \P(\sE_1 \times \sE_2 \times \sE_3): \mathrm{Proj}_{\sE_1}(\nu) \in F_1 \text{ and } \int \varphi d\nu \leq m \biggr\}. \nonumber
\end{align}
Then $\mathrm{Proj}_{\sE_1 \times \sE_2}(F)$ is a tight set of measures.
\end{theorem}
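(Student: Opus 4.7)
The plan is to exhibit, for each $\varepsilon > 0$, a single compact rectangle $K \times L \subset \sE_1 \times \sE_2$ whose complement carries mass less than $\varepsilon$ under every measure in $\mathrm{Proj}_{\sE_1 \times \sE_2}(F)$. The two mass contributions I have to control are $\mathrm{Proj}_{\sE_1}(\nu)(K^c)$, which is handled by the assumed tightness of $F_1$, and $\nu(K \times L^c \times \sE_3)$, which is where the $\mathrm{IC}$ property does all the work via a Markov-type argument.

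Concretely, I would first use the tightness of $F_1$ to choose a compact $K \subset \sE_1$ with $\mu_1(K^c) < \varepsilon/2$ for every $\mu_1 \in F_1$. Next I would pick any $M > 2m/\varepsilon$, and invoke the hypothesis $\varphi \in \mathrm{IC}(\sE_1,\sE_2)$ to produce a compact $L \subset \sE_2$ (depending on $K$ and $M$) such that $\varphi(e_1,e_2,e_3) \geq M$ on $K \times L^c \times \sE_3$. For an arbitrary $\nu \in F$, the non-negativity of $\varphi$ together with $\int \varphi \, d\nu \leq m$ then yields
\begin{equation}
M \cdot \nu(K \times L^c \times \sE_3) \;\leq\; \int_{K \times L^c \times \sE_3} \varphi \, d\nu \;\leq\; m, \nonumber
\end{equation}
so $\nu(K \times L^c \times \sE_3) \leq m/M < \varepsilon/2$. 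Combining this with $\mathrm{Proj}_{\sE_1}(\nu)(K^c) < \varepsilon/2$ and the union bound,
\begin{equation}
\mathrm{Proj}_{\sE_1 \times \sE_2}(\nu)\bigl((K \times L)^c\bigr) \leq \mathrm{Proj}_{\sE_1}(\nu)(K^c) + \nu(K \times L^c \times \sE_3) < \varepsilon. \nonumber
\end{equation}
Since $K \times L$ is compact in $\sE_1 \times \sE_2$, this establishes the required tightness.

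There is no serious obstacle here; the only thing to be careful about is that the compact set $L$ supplied by the $\mathrm{IC}$ property depends on the compact set $K$ already chosen from the tightness of $F_1$, so the order of the quantifiers matters: first fix $\varepsilon$, then $K$, then $M$, and only then apply Definition~\ref{aux4} to obtain $L$. The rest is a one-line Markov inequality and a union bound.
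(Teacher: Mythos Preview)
Your argument is correct and is precisely the standard proof of this lemma: choose $K$ from the tightness of $F_1$, then use the $\mathrm{IC}$ property with a large enough $M$ to get $L$, and finish with the Markov-type bound and the decomposition $(K\times L)^c = (K^c\times \sE_2)\cup (K\times L^c)$. Note that the paper does not actually supply a proof of this statement; it merely quotes it as \cite[Lemma~4.5]{GuYuBaLa15}, so there is nothing to compare against beyond observing that your write-up matches the standard argument one would find in that reference.
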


The last result is about convergence of bilinear forms constituting duality between a Banach space and its topological dual, when both terms in bilinear form converges in some sense.

\begin{proposition}\label{aux6}
Let $(\sE,\|\cdot\|)$ be a Banach space with its topological dual $\sE^*$, where the bilinear form that constitutes duality is denoted by $\langle e^*,e \rangle$, $e \in \sE$ and $e^* \in \sE^*$. Suppose $\lim_{n\rightarrow\infty} \| e_n - e \| = 0$ and $e_n^* \rightarrow e^*$ with respect to $w^*$-topology; that is, $\lim_{n\rightarrow\infty} |\langle e^*_n,e \rangle - \langle e^*,e \rangle| = 0$ for all $e \in \sE$. Then we have $\langle e^*_n,e_n \rangle \rightarrow \langle e^*,e \rangle$ as $n\rightarrow\infty$.
\end{proposition}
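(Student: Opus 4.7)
The plan is to use the standard ``add-and-subtract'' decomposition together with the Uniform Boundedness Principle to reduce the claim to the two convergences that are already in the hypothesis.

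First, I would write, for each $n$,
\begin{align}
\langle e_n^*, e_n \rangle - \langle e^*, e \rangle
 &= \langle e_n^*, e_n - e \rangle + \langle e_n^* - e^*, e \rangle. \nonumber
\end{align}
The second term tends to $0$ as $n \to \infty$ directly by the weak-$*$ convergence $e_n^* \to e^*$ applied to the fixed vector $e \in \sE$. So the whole task reduces to showing that the first term, $\langle e_n^*, e_n - e \rangle$, also tends to $0$.

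For the first term, I would use the basic duality estimate
\begin{align}
|\langle e_n^*, e_n - e \rangle| \leq \| e_n^* \|_{\sE^*} \, \| e_n - e \|. \nonumber
\end{align}
Since $\| e_n - e \| \to 0$ by hypothesis, it suffices to show that $\sup_n \| e_n^* \|_{\sE^*} < \infty$. This is where the main (and really the only) nontrivial step enters: the weak-$*$ convergence of $\{ e_n^* \}$ implies that for every fixed $e \in \sE$ the scalar sequence $\{ \langle e_n^*, e \rangle \}$ is convergent and hence bounded. Viewing each $e_n^*$ as a bounded linear functional on the Banach space $\sE$, the Banach--Steinhaus theorem (Uniform Boundedness Principle) then yields $\sup_n \| e_n^* \|_{\sE^*} < \infty$.

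Combining these pieces, $|\langle e_n^*, e_n - e \rangle| \leq (\sup_n \| e_n^* \|_{\sE^*}) \, \| e_n - e \| \to 0$, and together with the weak-$*$ convergence argument for the second term we conclude $\langle e_n^*, e_n \rangle \to \langle e^*, e \rangle$. The only delicate point is invoking the completeness of $\sE$ so that Banach--Steinhaus is applicable; otherwise the argument is purely algebraic manipulation and triangle-inequality bookkeeping.
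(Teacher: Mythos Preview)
Your proposal is correct and follows essentially the same approach as the paper: the same add-and-subtract decomposition, the same duality bound $|\langle e_n^*, e_n - e\rangle| \le \|e_n^*\|\,\|e_n - e\|$, and the same appeal to the Uniform Boundedness Principle to control $\sup_n \|e_n^*\|$. Your explicit remark that completeness of $\sE$ is what makes Banach--Steinhaus applicable is a nice addition.
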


\begin{proof}
Suppose $\lim_{n\rightarrow\infty} \| e_n - e \| = 0$ and $e_n^* \rightarrow e^*$ with respect to $w^*$-topology. Then we have
\begin{align}
\bigl| \langle e^*_n,e_n \rangle - \langle e^*,e \rangle \bigr| &\leq \bigl| \langle e^*_n,e_n \rangle - \langle e^*_n,e \rangle \bigr| + \bigl| \langle e^*_n,e \rangle - \langle e^*,e \rangle \bigr| \nonumber \\
&\leq \|e_n^*\| \|e_n-e\| + \bigl| \langle e^*_n,e \rangle - \langle e^*,e \rangle \bigr|. \nonumber
\end{align}
The second term in the last expression converges to zero as $e_n^* \rightarrow e^*$ with respect to $w^*$-topology. Note that $\sup_{n}  \|e_n^*\| < \infty$ by Uniform Boundedness Principle \cite[Theorem 5.13]{Fol99}. Hence the first term in the last expression also converges to zero as $\lim_{n\rightarrow\infty} \| e_n - e \| = 0$.
\end{proof}

\section{Existence of the Optimal Strategy for Static Team Problems}\label{sec2}

In this section, we show the existence of optimal strategy for static teams. Recall that $\bigl(\sX,{\cal X},\rP)$ is a probability space representing the state space, where $\sX$ is a Borel space and ${\cal X}$ is its Borel $\sigma$-algebra. We consider an $N$-agent static team problem in which Agent $i$ ($i=1,\ldots,N$) observes a random variable $y_i$ and takes an action $u_i$, where $y_i$ takes values in a Borel space $\sY_i$ and $u_i$ takes values in a Borel space $\sU_i$. Given any state realization $x$, the random variable $y_i$ has a distribution $W_i(\,\cdot\,|x)$; that is, $W_i(\,\cdot\,|x)$ is a stochastic kernel on $\sY_i$ given $\sX$.

The team cost function $c$ is a non-negative function of the state, observations, and actions; that is, $c: \sX \times \sY \times \sU \rightarrow [0,\infty)$, where $\sY \coloneqq \prod_{i=1}^N \sY_i$ and $\sU \coloneqq \prod_{i=1}^N \sU_i$. To prove the existence of team-optimal policies, we enlarge the space of strategies where each agent can also apply \emph{randomized} strategies; that is, for Agent $i$, the set of strategies $\Gamma_i$ is defined as
\begin{align}
\Gamma_i \coloneqq \biggl\{\gamma_i: \text{$\gamma_i(\,\cdot\,|y_i)$ is a stochastic kernel on $\sU_i$ given $\sY_i$} \biggr\}. \nonumber
\end{align}
We first prove the existence of optimal \emph{randomized} strategy. Then, using Blackwell's irrelevant information theorem \cite{Bla64}, we deduce that the optimal strategy can be chosen deterministic which therefore solves the problem for the original setup.

Recall that ${\bf \Gamma} = \prod_{i=1}^N \Gamma_i$. Then, the cost of the team $J: {\bf \Gamma} \rightarrow [0,\infty)$ is given by
\begin{align}
J(\underline{\gamma}) = \int_{\sX \times \sY \times \sU} c(x,{\bf y},{\bf u}) \underline{\gamma}(d\bu|\by) \rP(dx,d{\bf y}), \nonumber
\end{align}
where $\underline{\gamma}(d\bu|\by) = \prod_{i=1}^N \gamma_i(du_i|y_i)$. Here, with an abuse of notation, $\rP(dx,d{\bf y}) \coloneqq ~\rP(dx) \prod_{i=1}^N W_i(dy_i|x)$ denotes the joint distribution of the state and observations. Therefore, we have
\begin{align}
J^* = \inf_{\underline{\gamma} \in {\bf \Gamma} } J(\underline{\gamma}). \nonumber
\end{align}

For any strategy $\underline{\gamma}$, we let $Q_{\underline{\gamma}} = \underline{\gamma}(d\bu|\by) \rP(dx,d{\bf y})$ denote the probability measure induced on $\sX \times \sY \times \sU$. In the literature, $Q_{\underline{\gamma}}$ is called \emph{strategic} measure.

In this section, we impose the following assumptions.

\begin{assumption}\label{as1}
\begin{itemize}
\item [ ]
\item [(a)] The cost function $c$ is lower semi-continuous.
\item [(b)] $\sX$, $\sY_i$, and $\sU_i$ ($i=1,\ldots,N$) are locally compact.
\item [(c)] For all $i$, $W_i: \sX \rightarrow \P(\sY_i)$ is continuous with respect to the total variation distance.
\item [(d)] For all $i$, $W_i(dy_i|x) = q_i(y_i,x) \mu_i(dy_i)$ for some probability measure $\mu_i$ on $\sY_i$.
\end{itemize}
\end{assumption}

\begin{remark}
Note that, for all $i=1,\ldots,N$, if $q_i$ is continuous in $x$ and $q_i(y_i,x) \leq w_i(y_i)$ for some $\mu_i$-integrable $w_i$, then Assumption~\ref{as1}-(c) holds. Indeed, let $x_n \rightarrow x$ in $\sX$. Then we have
\begin{align}
\bigl\|W_i(\,\cdot\,|x_n) - W_i(\,\cdot\,|x)&\bigr\|_{TV} \nonumber \\
&= \int_{\sY_i} \bigl| q_i(y_i,x_n) - q_i(y_i,x) \bigr| \mu_i(dy_i). \nonumber
\end{align}
The last expression goes to $0$ as $n\rightarrow\infty$ by dominated convergence theorem.
\end{remark}

\begin{remark}
One common approach that is used in the literature \cite{GuYuBaLa15,YuSa17} to show the existence of team-optimal policies is strategic measure approach. In this approach, one first identifies a topology on the set of strategic measures $\Xi \coloneqq \{Q_{\underline{\gamma}}: \underline{\gamma} \in {\bf \Gamma}\}$ (in general, weak topology) and then proves the relative compactness of $\Xi$ along with lower semi-continuity of the cost function $J$ with respect to this topology. Then, if $\Xi$ is closed with respect to this topology, then one can deduce the existence of an optimal policy via Weierstrass Extreme Value Theorem. The main problem in this approach is to prove the closeness of $\Xi$, because convergence with respect to the topology defined on $\Xi$ does not in general preserve the statistical independence of the actions given the observations; that is, in the limiting strategic measure, action $u_i$ of Agent~$i$ may depend on observation $y_j$ of Agent~$j$ which is prohibited in the original problem (see, e.g., \cite[Theorem 2.7]{YuSa17}). Hence, to overcome this obstacle, in this paper we directly introduce a topology on the set policies ${\bf \Gamma}$ instead of the set of strategic measures $\Xi$. By this way, in the limiting measure, we can preserve the statistical independence of actions given the observations.
\end{remark}

\subsection{Topology on the Set of Policies ${\bf \Gamma}$}\label{topology}

In this section we introduce a topology on the set of policies ${\bf \Gamma}$, which will be used to obtain the existence of team-optimal policies. To this end, we first identify a topology on $\Gamma_i$ for each $i=1,\ldots,N$. Fix any $i \in \{1,\ldots,N\}$.

Recall that we denote by $C_0(\sU_i)$, $\M(\sU_i)$, and $\P(\sU_i)$ the set of real continuous functions vanishing at infinity on $\sU_i$, the set of finite signed measures on $\sU_i$, and the set of probability measures on $\sU_i$, respectively. For any $g \in C_0(\sU_i)$, let $\|g\|_{\infty} \coloneqq \sup_{u \in \sU_i} |g(u)|$ which turns $(C_0(\sU_i),\|\,\cdot\,\|_{\infty})$ into a Banach space. Let $\|\,\cdot\,\|_{TV}$ denote the total variation norm on $\M(\sU_i)$ which turns $(\M(\sU_i),\|\,\cdot\,\|_{TV})$ into a Banach space.

\begin{theorem}\cite[Theorem 7.17]{Fol99}
For any $\nu \in \M(\sU_i)$ and $g \in C_0(\sU_i)$, let $I_{\nu}(g) \coloneqq \langle g, \nu \rangle$, where
\begin{align}
\langle g, \nu \rangle \coloneqq \int_{\sU_i} g d\nu. \nonumber
\end{align}
Then the map $\nu \mapsto I_{\nu}$ is an isometric isomorphism from $\M(\sU_i)$ to $C_0(\sU_i)^*$. Hence, we can identify $C_0(\sU_i)^*$ with $\M(\sU_i)$.
\end{theorem}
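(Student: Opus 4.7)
This is the Riesz--Markov--Kakutani representation theorem for signed measures on a locally compact Hausdorff space, so the plan is to follow the standard path: first check that $\nu \mapsto I_\nu$ is a well-defined, linear, bounded map; then establish the isometry $\|I_\nu\|=\|\nu\|_{TV}$; and finally obtain surjectivity by reducing to the classical positive-functional case. Throughout, $\sU_i$ is a Borel subset of a complete separable metric space and, by Assumption~\ref{as1}-(b), locally compact, which is exactly the topological hypothesis the classical theorem requires.

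First I would verify that for any $\nu \in \M(\sU_i)$ the functional $I_\nu$ is well-defined on $C_0(\sU_i)$, linear in $g$, and satisfies $|I_\nu(g)| \le \|g\|_\infty \|\nu\|_{TV}$, so that $I_\nu \in C_0(\sU_i)^*$ with $\|I_\nu\| \le \|\nu\|_{TV}$; linearity and injectivity of $\nu \mapsto I_\nu$ are then immediate, the latter because continuous compactly supported functions separate finite signed measures. For the reverse inequality $\|I_\nu\| \ge \|\nu\|_{TV}$ I would invoke the Hahn--Jordan decomposition $\nu = \nu^+ - \nu^-$ supported on disjoint Borel sets $A^+$ and $A^-$ with $\|\nu\|_{TV} = \nu^+(\sU_i)+\nu^-(\sU_i)$, approximate the discontinuous test function $\mathbf{1}_{A^+} - \mathbf{1}_{A^-}$ by a sequence $g_n \in C_0(\sU_i)$ with $\|g_n\|_\infty \le 1$ (using Lusin's theorem together with inner regularity of $|\nu|$ on a locally compact, second countable space), and pass to the limit to get $I_\nu(g_n) \to \|\nu\|_{TV}$.

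For surjectivity, given $\Lambda \in C_0(\sU_i)^*$, the key step is to produce a Jordan-type decomposition $\Lambda = \Lambda^+ - \Lambda^-$ into positive linear functionals on $C_0(\sU_i)$; this is done by defining, for nonnegative $g \in C_0(\sU_i)$,
\begin{equation*}
\Lambda^+(g) \coloneqq \sup\bigl\{ \Lambda(h) : h \in C_0(\sU_i),\ 0 \le h \le g \bigr\},
\end{equation*}
extending by linearity, and checking (as in Folland) that $\Lambda^+$ is additive and bounded, with $\Lambda^- \coloneqq \Lambda^+ - \Lambda$ also positive. The classical Riesz--Markov--Kakutani theorem for positive linear functionals on $C_c$ then supplies regular Borel measures $\nu^+$ and $\nu^-$ representing $\Lambda^+$ and $\Lambda^-$; finiteness of $\nu^\pm$ follows from the boundedness of $\Lambda^\pm$, and $\nu \coloneqq \nu^+ - \nu^-$ satisfies $I_\nu = \Lambda$ on the dense subspace $C_c(\sU_i)$ and hence on all of $C_0(\sU_i)$.

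The main obstacle is the surjectivity step, specifically the construction of the positive-part functional $\Lambda^+$ and the verification of its additivity, which is the technical core underlying the classical Riesz representation theorem. Since the paper already cites \cite[Theorem 7.17]{Fol99}, I would only sketch this reduction and otherwise defer to the reference, rather than reproving the positive-functional Riesz theorem from scratch.
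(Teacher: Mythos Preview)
Your sketch is a correct outline of the standard proof of the Riesz--Markov--Kakutani representation theorem, but there is nothing to compare against: the paper does not prove this statement at all. It simply states the result and cites \cite[Theorem~7.17]{Fol99} as its source, treating it as a black-box tool needed to set up the duality between $\M(\sU_i)$ and $C_0(\sU_i)$ before moving on to the analogous (and equally uncited) duality of Theorem~\ref{duality2}.
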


A function $\gamma: \sY_i \rightarrow \M(\sU_i)$ is called $w^*$-measurable \cite[p. 18]{CeMe97} if the mapping $\sY_i \ni y \mapsto \langle g, \gamma(y) \rangle \in \R$ is measurable for all $g \in C_0(\sU_i)$. Let $\rL\bigl(\mu_i,\M(\sU_i)\bigr)$ denote the set of all such functions. Then, we define the following set
\begin{align}
&\rL_{\infty}\bigl(\mu_i,\M(\sU_i)\bigr) \nonumber \\
&\phantom{}\coloneqq \biggl\{ \gamma \in \rL\bigl(\mu_i,\M(\sU_i)\bigr): \|\gamma\|_{\infty} \coloneqq \ess \sup_{y \in \sY_i} \|g(y)\|_{TV} <\infty \biggr\}, \nonumber
\end{align}
where $\ess \sup$ is taken with respect to the measure $\mu_i$. Recall that $\mu_i$ is the reference probability measure in Assumption~\ref{as1}-(d) for the observation channel $W_i$.

A function $f: \sY_i \rightarrow C_0(\sU_i)$ is said to be simple if there exists $g_1,\ldots,g_n \in C_0(\sU_i)$ and $E_1,\ldots,E_n \in \B(\sY_i)$ such that $f = \sum_{i=1}^n g_i 1_{E_i}$. Define the Bochner integral of $f$ with respect to $\mu_i$ as
\begin{align}
\int_{\sY_i} f(y) \mu_i(dy) \coloneqq \sum_{i=1}^n g_i \mu_i(E_i). \nonumber
\end{align}
A function $f:\sY_i \rightarrow C_0(\sU_i)$ is  said to be strongly measurable, if there exists a sequence $\{f_n\}$ of simple functions with $\lim_{n\rightarrow\infty} \|f_n(y)-f(y)\|_{\infty}=0$ $\mu_i$-almost everywhere. The strongly measurable function $f$ is Bochner-integrable \cite{DiUh77} if $\int_{\sY_i} \|f(y)\|_{\infty} \mu_i(dy) < \infty$. In this case, the integral is given by
\begin{align}
\int_{\sY_i} f(y) \mu_i(dy) = \lim_{n\rightarrow\infty} \int_{\sY_i} f_n(y) \mu_i(dy), \nonumber
\end{align}
where $\{f_n\}$ is the sequence of simple functions which approximates $f$. Let $L_1\bigl(\mu_i,C_0(\sU_i)\bigr)$ denote the set of all Bochner-integrable functions from $\sY_i$ to $C_0(\sU_i)$ endowed with the norm
\begin{align}
\|f\|_1 \coloneqq \int_{\sY_i} \|f(y)\|_{\infty} \mu_i(dy_i). \nonumber
\end{align}
Then, we have the following theorem.

\begin{theorem}\cite[Theorem 1.5.5, p. 27]{CeMe97}\label{duality2}
For any $\gamma \in \rL_{\infty}\bigl(\mu_i,\M(\sU_i)\bigr)$ and $f \in L_1\bigl(\mu_i,C_0(\sU_i)\bigr)$, let
\begin{align}
T_{\gamma}(f) \coloneqq \int_{\sY_i} \langle f(y), \gamma(y) \rangle \mu_i(dy). \nonumber
\end{align}
Then the map $\gamma \mapsto T_{\gamma}$ is an isometric isomorphism from $\rL_{\infty}\bigl(\mu_i,\M(\sU_i)\bigr)$ to $L_1\bigl(\mu_i,C_0(\sU_i)\bigr)^*$. Hence, we can identify $L_1\bigl(\mu_i,C_0(\sU_i)\bigr)^*$ with $\rL_{\infty}\bigl(\mu_i,\M(\sU_i)\bigr)$.
\end{theorem}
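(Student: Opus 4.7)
The plan is to prove this duality in three stages: (i) verify that $T_\gamma$ is a well-defined element of $L_1(\mu_i,C_0(\sU_i))^*$ with $\|T_\gamma\|\le\|\gamma\|_\infty$; (ii) establish the reverse inequality, so that $\gamma\mapsto T_\gamma$ is an isometry; and (iii) prove surjectivity via a scalar Radon--Nikodym construction exploiting separability of $C_0(\sU_i)$. I expect step (iii) to be the main obstacle, since one must assemble a single $w^*$-measurable $\gamma$ from a family of scalar densities indexed by test functions and then verify that its action on all Bochner-integrable $f$ reproduces $T$.

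For step (i), first consider a simple function $f=\sum_{k=1}^n g_k 1_{E_k}$ with $g_k\in C_0(\sU_i)$ and $E_k\in\B(\sY_i)$. Since $\gamma$ is $w^*$-measurable, each $y\mapsto\langle g_k,\gamma(y)\rangle$ is measurable, hence so is $y\mapsto\langle f(y),\gamma(y)\rangle=\sum_k 1_{E_k}(y)\langle g_k,\gamma(y)\rangle$. For general Bochner-integrable $f$, pass to simple approximants $f_n\to f$ in $L_1$-norm; the pointwise bound $|\langle f(y),\gamma(y)\rangle|\le\|f(y)\|_\infty\|\gamma\|_\infty$ yields measurability in the limit and integrability of the product. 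Direct integration then gives
\[
|T_\gamma(f)|\le\int_{\sY_i}\|f(y)\|_\infty\,\|\gamma(y)\|_{TV}\,\mu_i(dy)\le\|\gamma\|_\infty\|f\|_1,
\]
establishing linearity, boundedness, and $\|T_\gamma\|\le\|\gamma\|_\infty$.

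For step (ii), fix $\varepsilon>0$ and choose $E\subset\sY_i$ with $0<\mu_i(E)<\infty$ on which $\|\gamma(y)\|_{TV}\ge\|\gamma\|_\infty-\varepsilon$. Since $\sU_i$ is a locally compact Polish space (Assumption~\ref{as1}-(b)), it is $\sigma$-compact, so $C_0(\sU_i)$ is separable. Fix a countable dense subset $\{g_n\}$ of the unit ball of $C_0(\sU_i)$; then $\|\gamma(y)\|_{TV}=\sup_n|\langle g_n,\gamma(y)\rangle|$ $\mu_i$-a.e. Partitioning $E$ into the measurable sets $E_n\coloneqq\{y\in E:|\langle g_n,\gamma(y)\rangle|\ge\|\gamma(y)\|_{TV}-\varepsilon\}\setminus\bigcup_{k<n}E_k$ yields a measurable index $n(y)$; with an appropriate sign $\epsilon(y)\in\{\pm1\}$, the simple function $f(y)\coloneqq\epsilon(y)g_{n(y)}1_E(y)$ satisfies $\|f\|_1\le\mu_i(E)$ and $T_\gamma(f)\ge(\|\gamma\|_\infty-2\varepsilon)\mu_i(E)$, so $\|T_\gamma\|\ge\|\gamma\|_\infty-2\varepsilon$. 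Letting $\varepsilon\downarrow 0$ completes the isometry.

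For step (iii), fix $T\in L_1(\mu_i,C_0(\sU_i))^*$. For each $g\in C_0(\sU_i)$ define the set function $\lambda_g(A)\coloneqq T(g1_A)$ on $\B(\sY_i)$. Linearity of $T$ and the bound $|\lambda_g(A)|\le\|T\|\|g\|_\infty\mu_i(A)$ show $\lambda_g$ is a countably additive signed measure with $\lambda_g\ll\mu_i$; the Radon--Nikodym theorem yields $h_g\in L_\infty(\mu_i)$ with $\lambda_g(A)=\int_A h_g\,d\mu_i$ and $\|h_g\|_\infty\le\|T\|\|g\|_\infty$. Picking a countable $\mathbb{Q}$-linear dense subset $\{g_n\}\subset C_0(\sU_i)$, there is a $\mu_i$-conull set $Y_0$ on which the assignment $g_n\mapsto h_{g_n}(y)$ is $\mathbb{Q}$-linear and bounded by $\|T\|\|g_n\|_\infty$; it therefore extends uniquely by continuity to a bounded linear functional on $C_0(\sU_i)$, which by the Riesz representation theorem cited just before Theorem~\ref{duality2} corresponds to an element $\gamma(y)\in\M(\sU_i)$ with $\|\gamma(y)\|_{TV}\le\|T\|$. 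The map $\gamma$ is $w^*$-measurable since each $y\mapsto\langle g_n,\gamma(y)\rangle=h_{g_n}(y)$ is measurable and the $g_n$ are dense. Finally the identity $T(g1_A)=\int_A\langle g,\gamma(y)\rangle\mu_i(dy)$ holds for $g=g_n$, hence by continuity for all $g\in C_0(\sU_i)$; extending by linearity to simple $f$ and by $L_1$-density of simple functions to every $f\in L_1(\mu_i,C_0(\sU_i))$ gives $T=T_\gamma$, proving surjectivity and completing the proof.
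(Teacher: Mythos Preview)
The paper does not supply its own proof of this statement: it is quoted verbatim as \cite[Theorem~1.5.5, p.~27]{CeMe97} and used as a black box. There is therefore nothing in the paper to compare your argument against.

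That said, your three-stage proof is essentially the standard route to this duality and is correct in outline. A couple of minor remarks: in step~(ii) you invoke measurability of $y\mapsto\|\gamma(y)\|_{TV}$ only after already selecting the set $E$; it would be cleaner to first note that separability of $C_0(\sU_i)$ gives $\|\gamma(y)\|_{TV}=\sup_n|\langle g_n,\gamma(y)\rangle|$ as a measurable function, and then choose $E$. In step~(iii), your use of the scalar Radon--Nikodym theorem together with a countable $\mathbb{Q}$-linear dense set to assemble $\gamma(y)$ is exactly the standard device for obtaining $w^*$-measurable (as opposed to Bochner-measurable) representatives, and is the reason the theorem holds without any Radon--Nikodym property assumption on $\M(\sU_i)$. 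The only place where you lean on the paper's standing hypotheses is the separability of $C_0(\sU_i)$, which you correctly derive from Assumption~\ref{as1}-(b); the cited reference proves the result in somewhat greater generality, but for the paper's purposes your argument suffices.
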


By Theorem~\ref{duality2}, we equip $\rL_{\infty}\bigl(\mu_i,\M(\sU_i)\bigr)$ with $w^*$-topology induced by $L_1\bigl(\mu_i,C_0(\sU_i)\bigr)$; that is, it is the smallest topology on $\rL_{\infty}\bigl(\mu_i,\M(\sU_i)\bigr)$ for which the mapping
\begin{align}
\rL_{\infty}\bigl(\mu_i,\M(\sU_i)\bigr) \ni \gamma \mapsto T_{\gamma}(f) \in \R \nonumber
\end{align}
is continuous for all $f \in L_1\bigl(\mu_i,C_0(\sU_i)\bigr)$ \cite{Fol99}. We write $\gamma_n \rightharpoonup^* \gamma$, if $\gamma_n$ converges to $\gamma$ in $\rL_{\infty}\bigl(\mu_i,\M(\sU_i)\bigr)$ with respect to $w^*$-topology. Note that, for this topology, we have been in part inspired by the topology introduced in \cite[Section 2.4]{BoArGh12}, where in this work, a similar topology is introduced for randomized Markov policies to study continuous-time stochastic control problems with average cost optimality criterion (see \cite{Bor89} for another construction of a topology on Markov policies).

\begin{lemma}\label{kernel}
Suppose $\gamma \in \rL_{\infty}\bigl(\mu_i,\M(\sU_i)\bigr)$ such that $\gamma(y) \in \P(\sU_i)$ $\mu_i$-a.e.. Then, for all $D \in {\cal U}_i$, the mapping $y \mapsto \gamma(y)(D)$ is measurable. Hence, $\gamma$ is a stochastic kernel.
\end{lemma}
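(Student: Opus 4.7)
The plan is to upgrade the $w^*$-measurability of $\gamma$ (which concerns integrals against continuous test functions vanishing at infinity) to pointwise measurability of $y \mapsto \gamma(y)(D)$ for every Borel set $D$, and then invoke the definition of a stochastic kernel. The argument has two stages: first, approximate indicators of open sets by $C_0$ functions to obtain measurability on open sets; then, a $\pi$--$\lambda$ (Dynkin) argument extends this to all of $\mathcal{U}_i$.

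I would first dispose of a technicality. Since the hypothesis $\gamma(y) \in \P(\sU_i)$ holds only $\mu_i$-a.e., and since modifying $\gamma$ on a $\mu_i$-null set does not alter its class in $\rL_\infty(\mu_i,\M(\sU_i))$ (nor any of the integrals $\langle g,\gamma(y)\rangle$ viewed $\mu_i$-a.e.), I would redefine $\gamma$ on the exceptional null set to be a fixed probability measure (e.g., a Dirac mass). After this adjustment, $\gamma(y)\in \P(\sU_i)$ for every $y$, while $w^*$-measurability is preserved up to this null modification, which does not affect the measurability of $y\mapsto \langle g,\gamma(y)\rangle$ as a map into $\R$.

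Next, for an open set $U \subset \sU_i$, I would construct a sequence $g_n \in C_c(\sU_i) \subset C_0(\sU_i)$ with $0 \le g_n \uparrow 1_U$ pointwise. This is where Assumption~\ref{as1}-(b) enters: a locally compact separable metric space is second countable, hence Lindel\"of, hence $\sigma$-compact, so one may exhaust $\sU_i$ by compact sets $K_n$ with $K_n \subset \intr(K_{n+1})$ and apply Urysohn's lemma inside $U\cap K_n$ to obtain compactly supported cutoffs, then take pointwise maxima to make the sequence monotone. By $w^*$-measurability, $y \mapsto \langle g_n,\gamma(y)\rangle$ is measurable for each $n$, and by monotone convergence $\langle g_n,\gamma(y)\rangle \uparrow \gamma(y)(U)$ for every $y$. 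A pointwise increasing limit of measurable functions is measurable, so $y \mapsto \gamma(y)(U)$ is measurable for every open $U$.

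Finally, let $\D \coloneqq \{D \in \mathcal{U}_i : y \mapsto \gamma(y)(D) \text{ is measurable}\}$. Since $\gamma(y)$ is a probability measure for every $y$, the family $\D$ contains $\sU_i$, is closed under complements (via $\gamma(y)(D^c)=1-\gamma(y)(D)$), and is closed under countable disjoint unions (by countable additivity together with the fact that countable sums of measurable functions are measurable). Hence $\D$ is a Dynkin system containing the open sets of $\sU_i$, which form a $\pi$-system generating $\mathcal{U}_i$; Dynkin's $\pi$--$\lambda$ theorem then gives $\D = \mathcal{U}_i$. Combined with $\gamma(y)\in \P(\sU_i)$ for every $y$, this shows $\gamma$ is a stochastic kernel. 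The main obstacle is the first stage: transferring measurability from $C_0$ test functions to indicators of sets, and this is precisely where local compactness (yielding $\sigma$-compactness, hence a monotone $C_c$-approximation of $1_U$) is essential.
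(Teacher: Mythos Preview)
Your argument is correct and follows the same overall strategy as the paper---pass from $C_0$ test functions to indicators via pointwise approximation, then extend to the full Borel $\sigma$-algebra by a monotone-class argument---but the tactical choices differ. The paper first upgrades $w^*$-measurability from $C_0(\sU_i)$ to all of $C_b(\sU_i)$ (approximating a bounded continuous $g$ pointwise by a uniformly bounded sequence in $C_0$ and using dominated convergence against the probability measure $\gamma(y)$), then approximates $1_F$ for closed $F$ by the bounded continuous functions $h_n(u)=\max\bigl(0,\,1-n\,d_{\sU_i}(u,F)\bigr)$, and finally cites \cite[Proposition~7.25]{BeSh78} for the extension to all Borel sets. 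You instead go directly from $C_c\subset C_0$ to open sets via a monotone $C_c$-approximation of $1_U$ (exploiting $\sigma$-compactness from Assumption~\ref{as1}-(b)), and then carry out the Dynkin $\pi$--$\lambda$ step explicitly rather than by citation. Your route is slightly more self-contained and skips the intermediate $C_b$ step; the paper's route has the mild advantage that the closed-set cutoffs $h_n$ are written down without appealing to $\sigma$-compactness. Your explicit handling of the $\mu_i$-null exceptional set is also more careful than the paper, which leaves that redefinition implicit.
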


\begin{proof}
Note first that the mapping $\sY_i \ni y \mapsto \langle g, \gamma(y) \rangle \in \R$ is measurable for all real, continuous, and bounded $g$ on $\sU_i$, because any such $g$ can be approximated pointwise by $\{g_n\}_{n\geq1} \subset C_0(\sU_i)$ satisfying $ \|g_n\|_{\infty} \leq \|g\|_{\infty}$ for all $n$. Moreover, for any closed set $F \subset \sU_i$, one can approximate pointwise the indicator function $1_F$ by continuous and bounded functions $h_n \coloneqq \max\bigl(1 - n d_{\sU_i}(x,C) \bigr)$, where $d_{\sU_i}$ is the metric on $\sU_i$ and $d_{\sU_i}(x,C) \coloneqq \inf_{y \in C} d_{\sU_i}(x,y)$. This implies that the mapping $y \mapsto \gamma(y)(F)$ is measurable for all closed set $F$ in $\sU_i$. Then the result follows by \cite[Proposition 7.25]{BeSh78}.
\end{proof}

By Lemma~\ref{kernel}, we have
\begin{align}
\Gamma_i = \biggl\{ \gamma \in \rL_{\infty}\bigl(\mu_i,\M(\sU_i)\bigr): \gamma(y) \in \P(\sU_i) \text{ } \mu_i-\text{a.e.} \biggr\}. \nonumber
\end{align}

Since $\P(\sU_i)$ is bounded in $\M(\sU_i)$, by Banach-Alaoglu Theorem \cite[Theorem 5.18]{Fol99}, $\Gamma_i$ is  relatively compact with respect to $w^*$-topology. Since $C_0(\sU_i)$ is separable, then by \cite[Lemma 1.3.2]{HeLa03}, $\Gamma_i$ is also relatively sequentially compact.

Note that $\Gamma_i$ is not closed with respect to $w^*$-topology. Indeed, let $\sY_i = \sU_i = \R$. Define $\gamma_n(y_i)(\,\cdot\,) \coloneqq \delta_n(\,\cdot\,)$ and $\gamma(y_i)(\,\cdot\,) \coloneqq 0(\,\cdot\,)$, where $0(\,\cdot\,)$ denotes the degenerate measure on $\sU_i$; that is, $0(D)=0$ for all $D \in \B(\R)$. Let $g \in L_1\bigl(\mu_i,C_0(\sU_i)\bigr)$. Then we have
\begin{align}
\lim_{n\rightarrow\infty}& \int_{\sY_i} \langle g(y), \gamma_n(y) \rangle \mu_i(dy) = \lim_{n\rightarrow\infty} \int_{\sY_i} g(y)(n) \mu_i(dy) \nonumber \\
&= \int_{\sY_i} \lim_{n\rightarrow\infty} g(y)(n) \mu_i(dy) \text{ (as $\|g(y)\|_{\infty}$ is $\mu_i$-integrable)} \nonumber \\
&= 0 \text{ (as $g(y) \in C_0(\sU_i)$)}. \nonumber
\end{align}
Hence, $\gamma_n \rightharpoonup^* \gamma$. But, $\gamma \notin \Gamma_i$, and so, $\Gamma_i$ is not closed.

In the remainder of this section, $\Gamma_i$ is equipped with this topology. In addition, ${\bf \Gamma}$ has the product topology induced by these $w^*$-topologies; that is, $\underline{\gamma}_n$ converges to $\underline{\gamma}$ in ${\bf \Gamma}$ with respect to the product topology if and only if $\gamma_{n,i} \rightharpoonup^* \gamma_i$ for all $i=1,\ldots,N$. In this case we write $\underline{\gamma}_n \rightharpoonup^* \underline{\gamma}$. Note that ${\bf \Gamma}$ is sequentially relatively compact under this topology.

\subsection{Existence of Team-Optimal Policies}

In this section, using the topology introduced in Section~\ref{topology}, we prove the existence of an optimal policy under the Assumption~\ref{as1} and the assumption below. For any $L>0$, we define
\begin{align}
{\bf \Gamma}_L &\coloneqq \biggl\{\underline{\gamma} \in {\bf \Gamma}: J(\underline{\gamma}) < J^* + L  \biggr\} \nonumber \\
\intertext{and}
S_{L} &\coloneqq \biggl\{\lambda \in \P(\sX \times \sY \times \sU): \nonumber \\
&\phantom{xx}\lambda(dx,d\by,d\bu) = \rP(dx) \prod_{i=1}^N \gamma_i(du_i|y_i) \mu_i(dy_i), \underline{\gamma} \in {\bf \Gamma}_L \biggr\}. \nonumber
\end{align}
For each $i=1,\ldots,N$, we define $S_L^i \coloneqq \Proj_{\sY_i \times \sU_i}(S_L)$.

\begin{assumption}\label{as2}
For some $L>0$, $S_L^i$ is tight for $i=1,\ldots,N$.
\end{assumption}

Before we continue with the proof, we will give several conditions that imply Assumption~\ref{as2}.

\begin{theorem}
Suppose \emph{either }of the following conditions hold:
\begin{itemize}
\item [(i)] $\sU_i$ is compact for all $i$.
\item [(ii)] For non-compact case, we assume
\begin{itemize}
\item [(a)] The cost function $c$ is in class $\mathrm{IC}(\sX \times \sY \times \sU^{^{[1:j-1]}},\sU_j)$, for all $j$.
\item [(b)] For all $j$, $q_j > 0$ and $q_j$ is lower semi-continuous.
\end{itemize}
\end{itemize}
Then, Assumption~\ref{as2} holds.
\end{theorem}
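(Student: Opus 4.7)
The plan is to reduce matters to the trivial marginal on $\sY_i$: for every $\lambda \in S_L$, $\Proj_{\sY_i}(\lambda) = \mu_i$, which is a single probability measure on a Borel space and hence tight by Proposition~\ref{aux2}. Thanks to Proposition~\ref{aux3}, it then suffices to show that $\Proj_{\sU_i}(S_L)$ is tight for each $i$. In case~(i) this is immediate, since $\sU_i$ itself is compact, so every probability measure on it is trivially tight, and Proposition~\ref{aux3} closes the argument.

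For case~(ii) the strategy is to convert the cost bound $J(\underline{\gamma}) < J^* + L$ into a uniform integral inequality over $S_L$ and then iterate Theorem~\ref{aux5} one agent at a time. Substituting $W_i(dy_i|x) = q_i(y_i,x)\mu_i(dy_i)$ into the definition of $J$, I set
\[
\tilde c(x,\by,\bu) \coloneqq c(x,\by,\bu)\prod_{i=1}^N q_i(y_i,x),
\]
so that $\int \tilde c \, d\lambda = J(\underline{\gamma}) < J^* + L$ for every $\lambda \in S_L$. The crucial, and most delicate, step is to verify that $\tilde c$ lies in $\mathrm{IC}\bigl(\sX \times \sY \times \sU^{^{[1:j-1]}},\sU_j\bigr)$ for every $j$. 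Given a compact $K \subset \sX \times \sY \times \sU^{^{[1:j-1]}}$, its projection onto $\sX \times \sY$ is compact; by (ii)(b) the function $\prod_i q_i$ is a finite product of strictly positive, non-negative lower semi-continuous functions, hence itself non-negative and lower semi-continuous, and therefore attains a strictly positive minimum $\delta_K > 0$ on that projection. Applying the $\mathrm{IC}$ hypothesis on $c$ at level $M/\delta_K$ produces a compact $L \subset \sU_j$ with $\inf_{K \times L^c \times \sU^{^{[j+1:N]}}} \tilde c \geq \delta_K \cdot (M/\delta_K) = M$, as required.

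The remainder is a finite induction on $j$. The base case is that $\Proj_{\sX \times \sY}(S_L) = \{\rP \otimes \bigotimes_{i=1}^N \mu_i\}$ is a singleton and so tight. For the inductive step, assuming $\Proj_{\sX \times \sY \times \sU^{^{[1:j-1]}}}(S_L)$ is tight, Theorem~\ref{aux5} applied with $\sE_1 = \sX \times \sY \times \sU^{^{[1:j-1]}}$, $\sE_2 = \sU_j$, $\sE_3 = \sU^{^{[j+1:N]}}$, $\varphi = \tilde c$ and $m = J^* + L$ yields tightness of $\Proj_{\sX \times \sY \times \sU^{^{[1:j]}}}(S_L)$. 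After $N$ iterations $S_L$ itself is tight, and hence so is each marginal $S_L^i$. The main obstacle is the $\mathrm{IC}$ verification for $\tilde c$ in the middle paragraph, which is exactly where strict positivity and lower semi-continuity of the densities $q_i$ become essential; the rest is routine bookkeeping in the agent index.
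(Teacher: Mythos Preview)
Your proposal is correct and follows essentially the same route as the paper: for case~(i) both arguments use that $\Proj_{\sY_i}(\lambda)=\mu_i$ is tight and combine this with compactness of $\sU_i$ via Proposition~\ref{aux3}; for case~(ii) both define $\tilde c = c\prod_i q_i$, use strict positivity and lower semi-continuity of the $q_i$ to show $\tilde c \in \mathrm{IC}(\sX\times\sY\times\sU^{^{[1:j-1]}},\sU_j)$, and then iterate Theorem~\ref{aux5} over $j$ starting from the tight singleton $\{\rP\otimes\bigotimes_i\mu_i\}$. Your verification of the $\mathrm{IC}$ property for $\tilde c$ (projecting $K$ onto $\sX\times\sY$ and extracting the positive minimum $\delta_K$) is slightly more explicit than the paper's, but the content is the same.
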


\begin{proof}
(i): Note that the marginal on $\sY_i$ of any measure in $S_L^i$ is $\mu_i$. Since $\mu_i$ is tight by Proposition~\ref{aux2} and $\P(\sU_i)$ is tight by the compactness of $\sU_i$,  $S_L^i$ is also tight by Proposition~\ref{aux3}.\\
\noindent(ii):
We define $\tilde{c}(x,\by,\bu) \coloneqq c(x,\by,\bu) \prod_{i=1}^N q_i(y_i,x)$. Since, for all $j$, $q_j$ is lower semi-continuous and strictly greater than $0$, for any compact set $K \subset \sX \times \sY \times \sU^{^{[1:j-1]}}$, we have $\inf_{K} \prod_i^N q_i(y_i,x) > 0$. This implies that $\tilde{c}$ is also in class $\mathrm{IC}(\sX \times \sY \times \sU^{^{[1:j-1]}},\sU_j)$ for $j=1,\ldots,N$. Then, by Theorem~\ref{aux5}, one can inductively prove that $\Proj_{\sX \times \sY \times \sU^{^{[1:j]}}}(S_L)$ is tight. Indeed, let $j=1$. Then $\tilde{c}$ is in $\mathrm{IC}(\sX \times \sY,\sU_1)$ and
\begin{align}
&S_L \subset \biggl\{ \lambda \in \P(\sX \times \sY \times \sU): \mathrm{Proj}_{\sX \times \sY}(\lambda)(dx,d\by) \nonumber \\
&\phantom{xxxxxxxx}= \rP(dx) \prod_i^N \mu_i(dy_i) \text{ and } \int \tilde{c} d\lambda \leq J^*+L \biggr\}. \nonumber
\end{align}
But since $\rP(dx) \prod_i^N \mu_i(dy_i)$ is tight, by Theorem~\ref{aux5}, $\Proj_{\sX \times \sY \times \sU_1}(S_L)$ is also tight. Suppose the assertion is true for $j$ and consider $j+1$. Note that $\tilde{c}$ is in $\mathrm{IC}(\sX \times \sY \times \sU^{^{[1:j]}}, \sU_{j+1})$ and
\begin{align}
&S_L \subset \biggl\{ \lambda \in \P(\sX \times \sY \times \sU): \mathrm{Proj}_{\sX \times \sY \times \sU^{^{[1:j]}}}(\lambda) \in \nonumber \\
&\phantom{xxxxxxxx} \mathrm{Proj}_{\sX \times \sY \times \sU^{^{[1:j]}}}(S_L) \text{ and } \int \tilde{c} d\lambda \leq J^*+L \biggr\}. \nonumber
\end{align}
Since $\mathrm{Proj}_{\sX \times \sY \times \sU^{^{[1:j]}}}(S_L)$ is tight by the induction hypothesis, $\mathrm{Proj}_{\sX \times \sY \times \sU^{^{[1:j+1]}}}(S_L)$ is also tight
by Theorem~\ref{aux5}. This completes the proof of assertion. But this result implies that $S_L^j$ is also tight for all $j$.
\end{proof}

Recall that $C_c(\sX \times \sY \times \sU)$ denotes the set of real continuous functions on $\sX \times \sY \times \sU$ with compact support. For any $g \in C_c(\sX \times \sY \times \sU)$, we define
\begin{align}
J_g(\underline{\gamma}) = \int_{\sX \times \sY \times \sU} g(x,{\bf y},{\bf u}) \underline{\gamma}(d\bu|\by) \rP(dx,d{\bf y}). \nonumber
\end{align}
We first prove the following result.

\begin{theorem}\label{main1}
Suppose that $\underline{\gamma}^{(n)} \rightharpoonup^* \underline{\gamma}$ as $n\rightarrow\infty$ and $g \in C_c(\sX \times \sY \times \sU)$. Then we have
\begin{align}
\lim_{n\rightarrow\infty} \bigl| J_g(\underline{\gamma}^{(n)}) - J_g(\underline{\gamma}) \bigr| = 0. \nonumber
\end{align}
\end{theorem}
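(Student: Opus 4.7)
The plan is to reduce to tensor-product test functions and then to peel off one factor at a time by combining the $w^*$-convergence of the kernels with Proposition~\ref{aux6}. Since $g$ has compact support, projecting onto each factor gives compact sets; applying Stone--Weierstrass on the resulting compact product (and extending by cut-offs) approximates $g$ uniformly by finite sums of tensor products $h(x,\by)\prod_{j=1}^N k_j(u_j)$ with $h \in C_c(\sX\times\sY)$ and $k_j \in C_c(\sU_j)$. Because $|J_g(\underline{\gamma}) - J_{g'}(\underline{\gamma})| \le \|g-g'\|_\infty$ uniformly in $\underline{\gamma}$ (as $Q_{\underline{\gamma}}$ is a probability measure), an $\epsilon/3$ argument reduces the problem to a single tensor product $g = h(x,\by)\prod_j k_j(u_j)$.

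For such $g$, Fubini gives
\begin{align}
J_g(\underline{\gamma}^{(n)}) = \int H(\by) \prod_{j=1}^N \phi_j^{(n)}(y_j) \prod_{j=1}^N \mu_j(dy_j), \nonumber
\end{align}
where $H(\by) \coloneqq \int h(x,\by)\prod_j q_j(y_j,x)\rP(dx)$ satisfies $H \in L_1(\prod_j \mu_j)$, and $\phi_j^{(n)}(y_j) \coloneqq \langle k_j, \gamma_j^{(n)}(y_j)\rangle$ satisfies $\|\phi_j^{(n)}\|_\infty \le \|k_j\|_\infty$. Testing the convergence $\gamma_j^{(n)} \rightharpoonup^* \gamma_j$ against the Bochner-integrable map $y_j \mapsto \psi(y_j)k_j \in C_0(\sU_j)$ (for arbitrary $\psi \in L_1(\mu_j)$; integrability is immediate since $\int \|\psi(y_j)k_j\|_\infty \mu_j(dy_j) = \|k_j\|_\infty\|\psi\|_1$) yields $\phi_j^{(n)} \to \phi_j$ in the $w^*$-topology of $L_\infty(\mu_j) = L_1(\mu_j)^*$. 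The theorem thus reduces to the following claim: for every $H \in L_1(\prod_j\mu_j)$ and every family of sequences $\phi_j^{(n)} \to \phi_j$ in $w^*$-$L_\infty(\mu_j)$ with $\|\phi_j^{(n)}\|_\infty \le C_j$, we have $\int H\prod_j\phi_j^{(n)}\prod_j\mu_j(dy_j) \to \int H\prod_j\phi_j\prod_j\mu_j(dy_j)$.

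I establish this claim by induction on $N$. The case $N=1$ is precisely the definition of $w^*$-convergence in $L_\infty(\mu_1)$. For the inductive step, rewrite the integral as $\int \phi_N^{(n)}(y_N)\,A_n(y_N)\,\mu_N(dy_N)$, where $A_n(y_N) \coloneqq \int H(\by_{-N},y_N)\prod_{j<N}\phi_j^{(n)}(y_j)\prod_{j<N}\mu_j(dy_j)$. By Fubini, $H(\,\cdot\,,y_N) \in L_1(\prod_{j<N}\mu_j)$ for $\mu_N$-a.e.\ $y_N$, so the induction hypothesis gives $A_n(y_N) \to A(y_N)$ pointwise a.e.; the bound $|A_n(y_N)| \le \prod_{j<N}C_j \cdot \int|H(\by_{-N},y_N)|\prod_{j<N}\mu_j(dy_j)$, which is an $L_1(\mu_N)$ function by Fubini, combined with dominated convergence promotes this to $A_n \to A$ in $L_1(\mu_N)$. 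Proposition~\ref{aux6}, with Banach space $L_1(\mu_N)$ and dual $L_\infty(\mu_N)$, then yields $\int\phi_N^{(n)}A_n\,\mu_N(dy_N) \to \int\phi_N A\,\mu_N(dy_N)$, closing the induction. The main obstacle is obtaining the $L_1$-norm convergence of $A_n$: the induction delivers only pointwise a.e.\ convergence, and it is the uniform $L_\infty$-bound on $\phi_j^{(n)}$—available because $k_j \in C_0(\sU_j)$ and $\gamma_j^{(n)}(y_j) \in \P(\sU_j)$ $\mu_j$-a.e.—that makes the dominated-convergence step work; both hypotheses of Proposition~\ref{aux6} are then genuinely used.
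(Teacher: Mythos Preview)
Your argument is correct, and it differs from the paper's in a substantive way. The paper fully separates variables via Stone--Weierstrass into $r(x)\prod_i f_i(y_i)g_i(u_i)$ and then telescopes, replacing one $\gamma_j^{(n)}$ by $\gamma_j$ at a time; because the remaining measure $T_{-j}$ carries the $x$-integration and still depends on $n$, the paper is forced to prove \emph{uniform} convergence $\sup_{x\in K}|\langle b_x,\gamma_j^{(n)}\rangle-\langle b_x,\gamma_j\rangle|\to 0$, and it is precisely here that Assumption~\ref{as1}(c) (total-variation continuity of $W_j$) is invoked to show that $\{b_x\}_{x\in K}$ is totally bounded in $L_1(\mu_j,C_0(\sU_j))$. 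You instead integrate out $x$ immediately, absorbing all the densities $q_j$ into a single $H\in L_1(\prod_j\mu_j)$, and then run an induction in which the inner integral $A_n$ is upgraded from pointwise to $L_1(\mu_N)$ convergence by dominated convergence before Proposition~\ref{aux6} is applied. This bypasses the uniform-in-$x$ estimate entirely, and in fact your proof never uses Assumption~\ref{as1}(c); only the density representation~(d) and local compactness~(b) are needed. So what your route buys is a strictly weaker hypothesis for Theorem~\ref{main1} (and hence for the downstream results, since~(c) is used nowhere else), while the paper's route makes the role of the observation-channel regularity explicit but, in light of your argument, unnecessarily so.
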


\begin{proof}
Fix any $g \in C_c(\sX \times \sY \times \sU)$. Then by Stone-Weierstrass Theorem \cite[Lemma 6.1]{Lan93}, $g$ can be uniformly approximated by functions of the form
\begin{align}
\sum_{j=1}^k r_j \prod_{i=1}^N f_{j,i} g_{j,i}, \nonumber
\end{align}
where $r_j \in C_c(\sX)$, $f_{j,i} \in C_c(\sY_i)$, and $g_{j,i} \in C_c(\sU_i)$ for each $j=1,\ldots,k$ and $i=1,\ldots,N$. This implies that it is sufficient to prove the result for functions of the form $r \prod_{i=1}^N f_{i} g_{i}$, where $r \in C_c(\sX)$, $f_{i} \in C_c(\sY_i)$, and $g_{i} \in C_c(\sU_i)$ for $i=1,\ldots,N$.  Therefore, in the sequel, we assume that $g = r \prod_{i=1}^N f_{i} g_{i}$.

Let $K \coloneqq \supp(r)$ which is a compact subset of $\sX$ as $r \in C_c(\sX)$. Then we have
\begin{align}
&\bigl| J_g(\underline{\gamma}^{(n)}) - J_g(\underline{\gamma}) \bigr|  \nonumber \\
&\leq  \bigl| J_g(\gamma_1^{(n)},\ldots,\gamma_N^{(n)}) - J_g(\gamma_1^{(n)},\ldots,\gamma_{N-1}^{(n)},\gamma_N) \bigr| \nonumber \\
&+  \bigl| J_g(\gamma_1^{(n)},\ldots,\gamma_{N-1}^{(n)},\gamma_N) - J_g(\gamma_1^{(n)},\ldots,\gamma_{N-2}^{(n)},\gamma_{N-1},\gamma_N) \bigr| \nonumber \\
&\phantom{x}\vdots \nonumber \\
&+  \bigl| J_g(\gamma_1^{(n)},\gamma_2,\ldots,\gamma_N) - J_g(\gamma_1,\ldots,\gamma_N) \bigr| \nonumber \\
&\eqqcolon \sum_{j=1}^N l_j^{(n)}. \nonumber
\end{align}
Let us consider the $j^{th}$ term in the above expression. Define the probability measure $T_{-j}$ on $\sX \times \sY_{-j} \times \sU_{-j}$ and real function $g_{-j}$ on $\sX \times \sY_{-j} \times \sU_{-j}$ as follows:
\begin{align}
&T_{-j} \nonumber \coloneqq \biggl( \prod_{i=j+1}^N \gamma_i(du_i|y_i) q_i(y_i,x) \mu_i(dy_i) \biggr) \times \nonumber \\
&\phantom{xxxxxxxxxxxxxx}\biggl( \prod_{i=1}^{j-1} \gamma_i^{(n)}(du_i|y_i) q_i(y_i,x) \mu_i(dy_i) \biggr) \rP(dx) \nonumber
&\intertext{and}
&g_{-j} \coloneqq r \prod_{i \neq j} f_i g_i. \nonumber
\end{align}
Then the $j^{th}$ term can be written as
\begin{align}
l_j^{(n)} &= \biggl| \int g_{-j} \biggl( \int f_j g_j q_j d\gamma_j^{(n)}\otimes\mu_j \biggr) dT_{-j} \nonumber \\
&\phantom{xxxxxxxxxxx} - \int g_{-j} \biggl( \int f_j g_j q_j d\gamma_j\otimes\mu_j \biggr) dT_{-j} \biggr|. \nonumber
\end{align}

Define, for each $x \in \sX$, the function
\begin{align}
b_x(y_j,u_j) \coloneqq f_j(y_j) g_j(u_j) q_j(y_j,x). \nonumber
\end{align}
One can prove that any $b_x$ is in $L_1(\mu_j,C_0(\sU_j))$; that is $b_x(y_j,\,\cdot\,) \in C_0(\sU_j)$ almost all $y_j \in \sY_j$ and $b_x$ can be approximated by simple functions. We will prove that the set $\{b_x\}_{x \in K} \subset L_1(\mu_j,C_0(\sU_j))$ is totally bounded. Indeed, let $x,\tilde{x} \in K$. Then
\begin{align}
&\|b_x - b_{\tilde{x}}\|_1 \coloneqq \int_{\sY_j} \sup_{u_j \in \sU_j} \bigl| f_j(y_j) g_j(u_j) q_j(y_j,x) \nonumber \\
&\phantom{xxxxxxxxxxxxxxxxxx}- f_j(y_j) g_j(u_j) q_j(y_j,\tilde{x}) | \mu_j(dy_j) \nonumber \\
&\leq \|f_j\|_{\infty} \|g_j\|_{\infty} \int_{\sY_j} \bigl| q_j(y_j,x) - q_j(y_j,\tilde{x}) | \mu_j(dy_j) \nonumber \\
&= \|f_j\|_{\infty} \|g_j\|_{\infty} \bigl\|W_j(\,\cdot\,|x) - W_j(\,\cdot\,|\tilde{x})\bigr\|_{TV}. \label{unif}
\end{align}
Since $W_j$ is assumed to be continuous with respect to the total variation distance, the set $\{b_x\}_{x \in K}$ is totally bounded; that is, for any $\varepsilon > 0$, there exists a finite number of points $x_1,\ldots,x_n \in K$ such that
\begin{align}
\{  b_x  \}_{x \in K} \subset \bigcup_{i=1}^n B_1(b_{x_i},\epsilon), \nonumber
\end{align}
where $B_1(b_{x},\varepsilon) \coloneqq \{b \in L_1(\mu_j,C_0(\sU_j)): \|b-b_x\|_1 \leq \varepsilon\}$. Indeed, fix any $\varepsilon > 0$. Note first that the observation kernel $W_j: K \rightarrow \P(\sY_j)$ is uniformly continuous since $K$ is compact. Hence for any $\epsilon > 0$, one can find $\delta > 0$ such that if $d_{\sX}(x,y) < \delta$, then
$\|W_j(\,\cdot\,|x) - W_j(\,\cdot\,|y)\|_{TV} < \epsilon$. For this $\delta > 0 $, since $K$ is compact, one can find a finite number of points $x_1,\ldots,x_n \in K$ such that
\begin{align}
K \subset \bigcup_{i=1}^n B(x_i,\delta),  \nonumber
\end{align}
where $B(x,\delta) \coloneqq \{y \in \sX: d_{\sX}(x,y) \leq \delta\}$. But this implies that
\begin{align}
\{  b_x  \}_{x \in K} \subset \bigcup_{i=1}^n B_1\bigl(b_{x_i},\epsilon \|f_j\|_{\infty} \|g_j\|_{\infty} \bigr) \nonumber
\end{align}
because if $b_x$ is some element in the set  $\{  b_x  \}_{x \in K}$, then $x$ is in $B(x_i,\delta)$ for some $i$; that is, $d_{\sX}(x,x_i) < \delta$. This implies from uniform continuity that $\|W_j(\,\cdot\,|x) - W_j(\,\cdot\,|x_i)\|_{TV} < \epsilon$, and so,  by (\ref{unif}), we have $\|b_{x_i} - b_x\|_1 < \epsilon \|f_j\|_{\infty} \|g_j\|_{\infty}$.
By choosing $\epsilon = \varepsilon/(\|f_j\|_{\infty} \|g_j\|_{\infty})$, we complete the proof of the assertion.

Using the total boundedness of the set $\{b_x\}_{x \in K}$, we prove the following:
\begin{align}
\lim_{n\rightarrow\infty} \sup_{x \in K} \bigl| \langle b_x, \gamma_j^{(n)} \rangle - \langle b_x, \gamma_j \rangle \bigr| = 0. \label{des1}
\end{align}
Suppose (\ref{des1}) is not true. Then there exists a sub-sequence $\{\gamma_j^{n_k}\}$ of $\{\gamma_j^{n}\}$ such that, for all $k$, we have
\begin{align}
\sup_{x \in K} \bigl| \langle b_x, \gamma_j^{(n_k)} \rangle - \langle b_x, \gamma_j \rangle \bigr| > 0
\end{align}
Suppose $\{\varepsilon_k\}$ be a sequence of positive real numbers converging to zero. For each $k$, let $x_k \in K$ be such that
\begin{align}
\bigl| \langle b_{x_k}, \gamma_j^{(n_k)} \rangle - \langle b_{x_k}, \gamma_j \rangle \bigr| &> \sup_{x \in K} \bigl| \langle b_x, \gamma_j^{(n_k)} \rangle - \langle b_x, \gamma_j \rangle \bigr| - \varepsilon_k \nonumber \\
&> 0. \label{cont}
\end{align}
Since $\{b_{x_{k}}\}$ is totally bounded, there exists a subsequence $\{b_{x_{k_l}}\}$ such that
\begin{align}
b_{x_{k_l}} \rightarrow b \in L_1(\mu_j,C_0(\sU_j)) \text{ in $L_1$-norm.} \nonumber
\end{align}
Then, by Proposition~\ref{aux6}, we have
\begin{align}
\langle b_{x_{k_l}}, \gamma_j^{(n_{k_l})} \rangle \rightarrow \langle b, \gamma_j \rangle \label{important}
\end{align}
as $\gamma_j^{(n_{k_l})} \rightharpoonup^* \gamma_j$. In addition, we also have $\langle b_{x_{k_l}}, \gamma_j \rangle \rightarrow \langle b, \gamma_j \rangle$. Hence,
\begin{align}
\lim_{l\rightarrow\infty} \bigl| \langle b_{x_{k_l}}, \gamma_j^{(n_{k_l})} \rangle - \langle b_{x_{k_l}}, \gamma_j \rangle \bigr| = 0. \nonumber
\end{align}
This contradicts with (\ref{cont}), and so, (\ref{des1}) is true.


Note that we have
\begin{align}
\int f_j g_j q_j d\gamma_j^{(n)}\otimes\mu_j - \int f_j g_j q_j d&\gamma_j\otimes\mu_j \nonumber \\
&= \langle b_x, \gamma_j^{(n)} \rangle - \langle b_x, \gamma_j \rangle. \nonumber
\end{align}
Therefore, we can bound $l_j^{(n)}$ as
\begin{align}
l_j^{(n)} &\leq \int g_{-j} \bigl| \langle b_x, \gamma_j^{(n)} \rangle - \langle b_x, \gamma_j \rangle \bigr| dT_{-j} \nonumber \\
&\leq \|g_{-j}\|_{\infty} \sup_{x\in K} \bigl| \langle b_x, \gamma_j^{(n)} \rangle - \langle b_x, \gamma_j \rangle \bigr|. \nonumber
\end{align}
Note that the last term converges to zero as $n\rightarrow\infty$ by (\ref{des1}). Since $j$ is arbitrary, $l_j^{(n)} \rightarrow 0$ as $n\rightarrow\infty$ for all $j=1,\ldots,N$. This implies that $J_g(\underline{\gamma}^{(n)}) \rightarrow J_g(\underline{\gamma})$ which completes the proof.
\end{proof}

The following theorem is the main result of this section which establishes the existence of team-optimal policies.

\begin{theorem}\label{main2}
Suppose Assumptions~\ref{as1} and \ref{as2} hold. Then, there exists $\underline{\gamma}^{*} \in {\bf \Gamma}$ which is optimal. Moreover, this strategy can be chosen deterministic; that is, $u_i=\gamma_i(y_i)$ for some measurable $\gamma_i:\sY_i\rightarrow \sU_i$, via Blackwell's irrelevant information theorem \cite[p. 457]{YuBa13}.
\end{theorem}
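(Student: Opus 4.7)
The plan is to apply the direct method of the calculus of variations: extract a $w^*$-convergent subsequence from a minimizing sequence, prove that its limit actually lies in ${\bf \Gamma}$, and deduce optimality from lower semi-continuity of $J$. Throughout, Assumption~\ref{as2} is the ingredient that prevents the limit from leaking mass to infinity and thereby escaping ${\bf \Gamma}$.

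First I would pick a minimizing sequence $\underline{\gamma}^{(n)} \in {\bf \Gamma}$ with $J(\underline{\gamma}^{(n)}) \downarrow J^*$, so that eventually $\underline{\gamma}^{(n)} \in {\bf \Gamma}_L$. Since each $\Gamma_i$ is sequentially $w^*$-relatively compact (Section~\ref{topology}, via Banach-Alaoglu and separability of $C_0(\sU_i)$) and sequential compactness is preserved under finite products, a diagonal extraction yields a subsequence $\underline{\gamma}^{(n_k)} \rightharpoonup^* \underline{\gamma}^*$ in $\prod_{i=1}^N \rL_{\infty}\bigl(\mu_i,\M(\sU_i)\bigr)$.

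The hard part is to verify $\underline{\gamma}^* \in {\bf \Gamma}$. By Lemma~\ref{kernel} it suffices to show $\gamma_i^*(y) \in \P(\sU_i)$ for $\mu_i$-a.e.\ $y$ and every $i$; positivity follows because a $w^*$-limit of positive measures remains positive (test against $f \otimes \phi$ with $f, \phi \geq 0$). For the unit-total-mass requirement, fix any nonnegative $f \in C_c(\sY_i)$ and an increasing sequence $\phi_m \in C_c(\sU_i)$ with $\phi_m \uparrow 1$. Each $(y,u) \mapsto f(y)\phi_m(u)$ lies in $L_1\bigl(\mu_i,C_0(\sU_i)\bigr)$, so the $w^*$-convergence yields, for every fixed $m$,
\begin{align}
\int f(y) \!\int\! \phi_m(u) \gamma_i^{(n_k)}(du|y) \mu_i(dy) \longrightarrow \int f(y) \!\int\! \phi_m(u) \gamma_i^*(du|y) \mu_i(dy). \nonumber
\end{align}
Tightness of $S_L^i$ supplies, for each $\varepsilon>0$, a compact $K \subset \sY_i \times \sU_i$ outside of which the measures $\gamma_i^{(n_k)}(du|y)\mu_i(dy)$ carry mass at most $\varepsilon$; choosing $m$ so that $\phi_m \equiv 1$ on the $\sU_i$-projection of $K$ makes the left-hand side within $\|f\|_\infty \varepsilon$ of $\int f\,d\mu_i$ uniformly in $k$. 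Interchanging the $k$- and $m$-limits then gives $\int f(y) \gamma_i^*(\sU_i|y) \mu_i(dy) = \int f\,d\mu_i$ for every nonnegative $f \in C_c(\sY_i)$, forcing $\gamma_i^*(\sU_i|y)=1$ $\mu_i$-a.e., and so $\underline{\gamma}^* \in {\bf \Gamma}$.

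Next I would establish $w^*$-lower semi-continuity of $J$. Since $c \geq 0$ is lower semi-continuous on the locally compact second countable space $\sX\times\sY\times\sU$ (Assumption~\ref{as1}(a,b)), standard approximation yields an increasing sequence $g_k \in C_c^+$ with $g_k \uparrow c$ pointwise; by monotone convergence $J(\underline{\gamma}) = \sup_k J_{g_k}(\underline{\gamma})$, and each $J_{g_k}$ is $w^*$-continuous by Theorem~\ref{main1}, so $J$ is $w^*$-LSC as a supremum of continuous functions. Hence
\begin{align}
J(\underline{\gamma}^*) \leq \liminf_{k\to\infty} J(\underline{\gamma}^{(n_k)}) = J^*, \nonumber
\end{align}
and the reverse inequality is automatic from $\underline{\gamma}^* \in {\bf \Gamma}$, giving $J(\underline{\gamma}^*)=J^*$. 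Finally, Blackwell's irrelevant information theorem applied agent by agent removes the private randomization in each $\gamma_i^*$ without raising the expected cost: conditional on $y_i$, the randomization in $\gamma_i^*$ is independent of every other quantity in the model, hence irrelevant, and a deterministic measurable selector $\gamma_i^D:\sY_i\to\sU_i$ can be chosen without increasing $J$, yielding a deterministic team-optimal policy.
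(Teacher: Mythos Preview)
Your proof is correct and follows essentially the same architecture as the paper's: a minimizing sequence, $w^*$-relative compactness to extract a limit, Theorem~\ref{main1} together with an increasing approximation $c_m \uparrow c$ in $C_c$ for lower semi-continuity of $J$, and Assumption~\ref{as2} to guarantee the limit remains in ${\bf \Gamma}$. The only difference is in the verification that $\gamma_i^*(y)\in\P(\sU_i)$ $\mu_i$-a.e.: the paper invokes Prokhorov to pass to a further weakly convergent subsequence of $\gamma_i^{(n_k)}\otimes\mu_i$ and identifies the weak limit (a probability measure) with $\gamma_i^*\otimes\mu_i$ by testing against $C_c(\sY_i\times\sU_i)$, whereas you argue directly with an approximate identity $\phi_m\uparrow 1$ and the uniform tightness bound---both arguments are short and equivalent in spirit.
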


\begin{proof}
Suppose $\{\underline{\gamma}_n\}$ be a minimizing sequence in ${\bf \Gamma}_L$; that is, for each $n$, we have $J(\underline{\gamma}_n) < J^* + \epsilon(n)$, where $\epsilon(n)\rightarrow0$ as $n\rightarrow\infty$. Since ${\bf \Gamma}$ is relatively sequentially compact with respect to $w^*$-topology, there exists a subsequence $\{\underline{\gamma}_{n_k}\}$ of $\{\underline{\gamma}_n\}$ such that
\begin{align}
\underline{\gamma}_{n_k} \rightharpoonup^* \underline{\gamma}^*, \nonumber
\end{align}
for some $\underline{\gamma} \in \prod_{i=1}^N \rL_{\infty}\bigl(\mu_i,\M(\sU_i)\bigr)$ (recall that ${\bf \Gamma}$ is not closed with respect to $w^*$-topology). As $\sX \times \sY \times \sU$ is locally compact, one can find a sequence of $\{c_m\} \subset C_c(\sX \times \sY \times \sU)$ such that $0\leq c_1 \leq c_2 \leq \ldots \leq c_m \leq \ldots \leq c$ and $c_m \rightarrow c$ pointwise (see the proof of \cite[Proposition 1.4.18]{HeLa03}). Then we have
\begin{align}
J^* = \liminf_{k\rightarrow\infty} J(\underline{\gamma}^{n_k}) &=
\liminf_{k\rightarrow\infty} \int c dQ_{\underline{\gamma}^{n_k}} \nonumber \\
&= \liminf_{k\rightarrow\infty} \sup_{m\geq1} \int c_m dQ_{\underline{\gamma}^{n_k}} \nonumber \\
&\geq  \sup_{m\geq1} \liminf_{k\rightarrow\infty} \int c_m dQ_{\underline{\gamma}^{n_k}}  \nonumber \\
&=\sup_{m\geq1} \int c_m dQ_{\underline{\gamma}^*} \text{ }(\text{by Theorem~\ref{main1}}) \nonumber \\
&=\int c dQ_{\underline{\gamma}^*}. \nonumber
\end{align}
This implies that if $\gamma_i^* \in \P(\sU_i)$ $\mu_i$-a.e. for all $i$, then $\underline{\gamma}^*$ is the optimal policy. We now prove $\gamma_i^* \in \P(\sU_i)$ $\mu_i$-a.e. for all $i$.

Fix any $i$. Note that the sequence $\{\gamma_i^{n_k}\otimes\mu_i\}$ is tight as it is a subset of $S_L^i$. Thus, by Theorem~\ref{aux1}, there exists a further subsequence, denoted for simplicity by $\{\gamma_i^{l}\otimes\mu_i\}$, that converges weakly to some $\lambda \in \P(\sY_i \times \sU_i)$. Suppose $g \in C_c(\sY_i \times \sU_i)$, and so, $g(y_i,\,\cdot\,) \in L_1\bigl(\mu_i,C_0(\sU_i)\bigr)$. Since $\gamma_i^l \rightharpoonup^* \gamma^*$, we have
\begin{align}
&\lim_{l\rightarrow\infty} \int_{\sY_i \times \sU_i} g(u_i,y_i) \gamma_i^l(du_i|y_i) \mu_i(dy_i) \nonumber \\
&\phantom{xxxxxxxxxxxx}= \int_{\sY_i \times \sU_i} g(u_i,y_i) \gamma_i^*(du_i|y_i) \mu_i(dy_i). \nonumber
\end{align}
Similarly, by weak convergence of $\gamma_i^{l}\otimes\mu_i$ to $\lambda$, we have
\begin{align}
&\lim_{l\rightarrow\infty} \int_{\sY_i \times \sU_i} g(u_i,y_i) \gamma_i^l(du_i|y_i) \mu_i(dy_i) \nonumber \\
&\phantom{xxxxxxxxxxxx}= \int_{\sY_i \times \sU_i} g(u_i,y_i) \lambda(du_i,dy_i). \nonumber
\end{align}
This implies that $\gamma_i^*\otimes\mu_i = \lambda$, and so, $\gamma_i^*\otimes\mu_i(\sY_i \times \sU_i) = 1$. Hence, $\gamma_i^* \in \P(\sU_i)$ $\mu_i$-a.e. Thus, $\underline{\gamma}^*$ is an optimal policy.

The existence of deterministic optimal policy can proved as in the proof of \cite[Theorem 3.8]{GuYuBaLa15} and hence, we omit the details.
\end{proof}

\section{Existence of the Optimal Strategy for Dynamic Team Problems}\label{sec3}

The existence of team-optimal solutions for dynamic team problems can be established by a static reduction method. To this end, we first review the equivalence between dynamic teams and their static reduction (this is called {\it the equivalent model} \cite{wit88}\index{Witsenhausen's Equivalent Model}). Consider a dynamic team setting where there are $N$ decision epochs, and Agent~$i$ observes $y^i \sim W_i(\,\cdot\,|x,{\bf u}^{^{[1:i-1]}})$, and the decisions are generated as $u^i=\gamma^i(y^i)$. The resulting cost under a given team policy $\underline{\gamma}$ is
\[J(\underline{\gamma}) = E[c(x,{\bf y},{\bf u})].\]
This dynamic team can be converted to a static team provided that the following absolute continuity condition holds.

\begin{assumption}\label{as3}
For every $i=1,\ldots,N$, there exists a function $f_i: \sX\times\sU^{^{[1:i-1]}}\times\sY^i \rightarrow [0,\infty)$ and a probability measure $\nu_i$ on $\sY^i$ such that for all $S \in {\cal Y}^i$ we have
\begin{align}
W_i(S | x,{\bf u}^{^{[1:i-1]}}) = \int_{S} f_i(x,{\bf u}^{^{[1:i-1]}},y^i) \nu_i(dy^i). \nonumber
\end{align}
\end{assumption}

Therefore, for a fixed choice of $\underline{\gamma}$, the joint distribution of $(x,{\bf y})$ is given by
\[\rP(dx,d{\bf y}) = \rP(dx) \prod_{i=1}^N f_i(x,{\bf u}^{^{[1:i-1]}},y^i) \nu_i(dy^i),\]
where $\bu^{^{[1:i-1]}} = \bigl(\gamma^1(y^1),\ldots,\gamma^{i-1}(y^{i-1})\bigr)$. The cost function $J(\underline{\gamma})$ can then be written as
\begin{align}
J(\underline{\gamma}) &= \int_{\sX\times \sY} c(x,{\bf y},{\bf u}) \rP(dx) \prod_{i=1}^N f_i(x,{\bf u}^{^{[1:i-1]}},y^i) \nu_i(dy^i) \nonumber \\
&= \int_{\sX\times \sY} \tilde{c}(x,{\bf y},{\bf u}) \widetilde{\rP}(dx,d{\bf y}), \nonumber
\end{align}
where $\tilde{c}(x,{\bf y},{\bf u}) \coloneqq c(x,{\bf y},{\bf u})\prod_{i=1}^N f_i(x,{\bf u}^{^{[1:i-1]}},y^i)$ and
$\widetilde{\rP}(dx,d{\bf y}) \coloneqq \rP(dx)\prod_{i=1}^N \nu_i(dy^i)$. The observations now can be regarded as independent, and by incorporating the $f_i$ terms into $c$, we can obtain an equivalent {\it static team} problem. Hence, the essential step is to appropriately adjust the probability space and the cost function.

\subsection{Existence of Team-Optimal Policies}

In this section, using Theorem~\ref{main2} and the static reduction of the dynamic team problems, we prove the existence of an optimal policy.
Similar to the static case, we enlarge the space of policies such that agents can apply randomized policies as well.

Analogous to static case, we define, for any $L>0$, the following sets
\begin{align}
{\bf \Gamma}_L &\coloneqq \biggl\{\underline{\gamma} \in {\bf \Gamma}: J(\underline{\gamma}) < J^* + L  \biggr\} \nonumber \\
\intertext{and}
S_L &\coloneqq \biggl\{\lambda \in \P(\sX \times \sY \times \sU): \nonumber \\
&\phantom{xx}\lambda(dx,d\by,d\bu) = \rP(dx) \prod_{i=1}^N \gamma_i(du_i|y_i) \nu_i(dy_i), \underline{\gamma} \in {\bf \Gamma}_L \biggr\}. \nonumber
\end{align}
In addition, for all $j$, we define
\begin{align}
&S_{L,j} \coloneqq \biggl\{\lambda \in \P(\sX \times \sY \times \sU): \lambda(dx,d\by,d\bu) =  \rP(dx)  \nonumber \\
&\phantom{xx} \prod_{i=1}^j \gamma_i(du_i|y_i) \nu_i(dy_i)  \prod_{i=j+1}^N \gamma_i(du_i|y_i) W_i(dy_i|x,\bu^{^{[1:i-1]}}), \nonumber \\
&\phantom{xxxxxxxxxxxxxxxxxxxxxxxxxxxxxxxxxxxxxx}\underline{\gamma} \in {\bf \Gamma}_L \biggr\}. \nonumber
\end{align}
For each $i=1,\ldots,N$, define $S_L^i \coloneqq \Proj_{\sY_i \times \sU_i}(S_L) = \Proj_{\sY_i \times \sU_i}(S_{L,i})$. Then, we impose the following assumptions.

\begin{assumption}\label{as4}
Suppose that Assumption~\ref{as1}-(a),(b) and Assumption~\ref{as3} hold. In addition, we assume that
\begin{itemize}
\item [(a)] For all $i$, $f_i$ is lower semi-continuous.
\item [(b)] For some $L>0$, $S_L^i$ is tight for $i=1,\ldots,N$.
\end{itemize}
\end{assumption}

Before we continue with the main theorem of this section, we will give several conditions that imply Assumption~\ref{as4}-(b).

\begin{theorem}\label{tight2}
Suppose \emph{either }of the following conditions hold:
\begin{itemize}
\item [(i)] $\sU_i$ is compact for all $i$.
\item [(ii)] For non-compact case, we assume
\begin{itemize}
\item [(a)] The cost function $c$ is in class $\mathrm{IC}(\sX \times \sY^{^{[1:j]}} \times \sU^{^{[1:j-1]}},\sU_j)$, for all $j$.
\item [(b)] For all $j$, $f_j > 0$.
\end{itemize}
\end{itemize}
Then, Assumption~\ref{as4}-(b) holds.
\end{theorem}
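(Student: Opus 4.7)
The plan is to handle the two cases separately, with case (i) following immediately from Proposition~\ref{aux3} and case (ii) requiring an inductive argument built around the partial static reductions $S_{L,j}$.

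For part (i), when every $\sU_i$ is compact, $\P(\sU_i)$ is tight automatically, and the $\sY_i$-marginal of each element of $S_L^i$ equals the single probability measure $\nu_i$, which is tight by Proposition~\ref{aux2}. Proposition~\ref{aux3} then yields tightness of $S_L^i$.

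For part (ii), I prove by induction on $j = 0, 1, \ldots, N$ that $\Proj_{\sX \times \sY^{^{[1:j]}} \times \sU^{^{[1:j]}}}(S_{L,j})$ is tight. The case $j = N$ gives tightness of $S_L = S_{L,N}$, whence $S_L^i = \Proj_{\sY_i \times \sU_i}(S_L)$ is tight for every $i$; the base case $j=0$ reduces to the singleton $\{\rP\}$. For the step from $j$ to $j+1$, I first exploit the fact that marginalizing a $\lambda \in S_{L,j+1}$ over $\sY^{^{[j+1:N]}} \times \sU^{^{[j+1:N]}}$ produces the same measure $\rP(dx) \prod_{k=1}^j \gamma_k(du_k|y_k)\nu_k(dy_k)$ as marginalizing a $\lambda \in S_{L,j}$, since the remaining observation-action kernels form a probability measure in those variables for every fixed $(x, \bu^{^{[1:j]}})$; the inductive hypothesis therefore transfers to $\Proj_{\sX \times \sY^{^{[1:j]}} \times \sU^{^{[1:j]}}}(S_{L,j+1})$. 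Because $y_{j+1}$ has the fixed marginal $\nu_{j+1}$ under $S_{L,j+1}$, Proposition~\ref{aux3} then upgrades tightness to $\Proj_{\sX \times \sY^{^{[1:j+1]}} \times \sU^{^{[1:j]}}}(S_{L,j+1})$.

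The final sub-step, and what I anticipate as the main obstacle, is to include $\sU_{j+1}$ via Theorem~\ref{aux5} applied with $\sE_1 = \sX \times \sY^{^{[1:j+1]}} \times \sU^{^{[1:j]}}$, $\sE_2 = \sU_{j+1}$, $\sE_3 = \sY^{^{[j+2:N]}} \times \sU^{^{[j+2:N]}}$, and the cost $\tilde{c}^{(j+1)} \coloneqq c \cdot \prod_{i=1}^{j+1} f_i(x, \bu^{^{[1:i-1]}}, y^i)$. Using $W_k(dy_k|x,\bu^{^{[1:k-1]}}) = f_k \nu_k(dy_k)$ for $k \leq j+1$ and leaving the remaining observations in dynamic form, I can write $J(\underline{\gamma}) = \int \tilde{c}^{(j+1)} d\lambda_{\underline{\gamma}}$ for $\lambda_{\underline{\gamma}} \in S_{L,j+1}$, which supplies the uniform bound $\int \tilde{c}^{(j+1)} d\lambda < J^* + L$ on $S_{L,j+1}$. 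The delicate point is the $\mathrm{IC}$ condition $\tilde{c}^{(j+1)} \in \mathrm{IC}(\sE_1, \sE_2)$; here the key observation, which would be unavailable for the full product $\prod_{i=1}^N f_i$, is that for every $i \leq j+1$ the arguments $(x, \bu^{^{[1:i-1]}}, y^i)$ of $f_i$ all lie in $\sE_1$, so $\prod_{i=1}^{j+1} f_i$ depends only on $\sE_1$ variables. It is lower semi-continuous and strictly positive by Assumption~\ref{as4}(a) and hypothesis~(b), so $\alpha_K \coloneqq \inf_K \prod_{i=1}^{j+1} f_i > 0$ on any compact $K \subset \sE_1$; combined with the $\mathrm{IC}$ property of $c$ supplied by hypothesis~(a) at level $M/\alpha_K$, this gives $\tilde{c}^{(j+1)} \geq \alpha_K c \geq M$ on the required set $K \times L^c \times \sE_3$. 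Theorem~\ref{aux5} then produces tightness of $\Proj_{\sX \times \sY^{^{[1:j+1]}} \times \sU^{^{[1:j+1]}}}(S_{L,j+1})$, closing the induction.
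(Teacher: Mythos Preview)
Your proposal is correct and follows essentially the same route as the paper's proof: case~(i) via Propositions~\ref{aux2} and~\ref{aux3}, and case~(ii) via an induction that adjoins one action coordinate at a time using the partial reductions $S_{L,j}$, the modified costs $\tilde{c}^{(j+1)}=c\prod_{i\le j+1}f_i$, and Theorem~\ref{aux5}, with the key observation that $\prod_{i\le j+1}f_i$ depends only on the $\sE_1$-variables so its positive infimum on a compact set propagates the $\mathrm{IC}$ property from $c$ to $\tilde{c}^{(j+1)}$. The only cosmetic differences are that the paper starts the induction at $j=1$ rather than $j=0$ and phrases the inductive statement in terms of $\Proj_{\sX\times\sY^{[1:j]}\times\sU^{[1:j]}}(S_L)$ rather than the same projection of $S_{L,j}$, relying on the equality of these projections that you also note.
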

\begin{proof}
(i): Note that the marginal on $\sY_i$ of any measure in $S_L^i$ is $\nu_i$. Since $\nu_i$ is tight by Proposition~\ref{aux2} and $\P(\sU_i)$ is tight by the compactness of $\sU_i$,  $S_L^i$ is also tight by Proposition~\ref{aux3}.\\
\noindent(ii):  For each $j$, we define
\begin{align}
\tilde{c}_j(x,\by,\bu) \coloneqq c(x,\by,\bu) \prod_{i=1}^j f_i(x,\bu^{^{[1:i-1]}},y_i). \nonumber
\end{align}
Since, for all $i$, $f_i$ is lower semi-continuous and strictly greater than $0$, for any compact set $K \subset \sX \times \sY^{^{[1:j]}} \times \sU^{^{[1:j-1]}}$, we have $\inf_{K} \prod_i^j f_i(x,\bu^{^{[1:i-1]}},y_i) > 0$. This implies that $\tilde{c}_j$ is also in class $\mathrm{IC}(\sX \times \sY^{^{[1:j]}} \times \sU^{^{[1:j-1]}},\sU_j)$. Then, by Theorem~\ref{aux5}, one can inductively prove that $\Proj_{\sX \times \sY^{^{[1:j]}} \times \sU^{^{[1:j]}}}(S_L)$ is tight for all $j$. Indeed, let $j=1$. Then $\tilde{c}_1$ is in $\mathrm{IC}(\sX \times \sY_{1},\sU_1)$ and
\begin{align}
&S_{L,1} \subset \biggl\{ \lambda \in \P(\sX \times \sY \times \sU): \mathrm{Proj}_{\sX \times \sY}(\lambda)(dx,dy_1) \nonumber \\
&\phantom{xxxxxxxx}= \rP(dx) \nu_1(dy_1) \text{ and } \int \tilde{c}_1 d\lambda \leq J^*+L \biggr\}. \nonumber
\end{align}
But since $\rP(dx) \nu_1(dy_1)$ is tight, by Theorem~\ref{aux5} $\Proj_{\sX \times \sY_{1} \times \sU_1}(S_{L,1}) = \Proj_{\sX \times \sY_{1} \times \sU_1}(S_L)$ is also tight. Suppose the assertion is true for $j$ and consider $j+1$. Since $\Proj_{\sY_{j+1}}(S_{L}) = \nu_{j+1}$ is tight by Proposition~\ref{aux2}, by Proposition~\ref{aux3} $\Proj_{\sX \times \sY^{^{[1:j+1]}} \times \sU^{^{[1:j]}}}(S_L)$ is also tight as $\Proj_{\sX \times \sY^{^{[1:j]}} \times \sU^{^{[1:j]}}}(S_L)$ is tight by the induction hypothesis.
Note that $\tilde{c}_{j+1}$ is in $\mathrm{IC}(\sX \times \sY^{^{[1:j+1]}} \times \sU^{^{[1:j]}}, \sU_{j+1})$ and
\begin{align}
&S_{L,j+1} \subset \biggl\{ \lambda \in \P(\sX \times \sY \times \sU): \mathrm{Proj}_{\sX \times \sY^{^{[1:j+1]}} \times \sU^{^{[1:j]}}}(\lambda) \nonumber \\
&\phantom{x} \in \mathrm{Proj}_{\sX \times \sY^{^{[1:j+1]}} \times \sU^{^{[1:j]}}}(S_L) \text{ and } \int \tilde{c}_{j+1} d\lambda \leq J^*+L \biggr\}. \nonumber
\end{align}
Since $\mathrm{Proj}_{\sX \times \sY^{^{[1:j+1]}} \times \sU^{^{[1:j]}}}(S_L)$ is tight,
\begin{align}
\mathrm{Proj}_{\sX \times \sY^{^{[1:j+1]}} \times \sU^{^{[1:j+1]}}}(S_{L,j+1}) = \mathrm{Proj}_{\sX \times \sY^{^{[1:j+1]}} \times \sU^{^{[1:j+1]}}}(S_L) \nonumber
\end{align}
is also tight by Theorem~\ref{aux5}. This completes the proof of assertion. But this result implies that $S_L^j$ is tight for all $j$.
\end{proof}

\begin{theorem}\label{main3}
Suppose Assumptions~\ref{as4} holds. Then, the static reduction of the dynamic team model satisfies Assumptions~\ref{as1} and \ref{as2}. Therefore, by Theorem~\ref{main2}, there exists an optimal strategy for dynamic team problem.
\end{theorem}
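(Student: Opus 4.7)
The plan is to verify that the equivalent static model obtained via Witsenhausen's reduction satisfies every hypothesis of Assumptions~\ref{as1} and \ref{as2}, and then directly invoke Theorem~\ref{main2}.

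First, I would observe that in the reduction the joint distribution on $(x,\by)$ is replaced by $\widetilde{\rP}(dx,d\by)=\rP(dx)\prod_{i=1}^N \nu_i(dy_i)$, so the observation kernel of the reduced model for Agent~$i$ is the constant kernel $\widetilde{W}_i(\,\cdot\,|x)\equiv \nu_i(\,\cdot\,)$. This immediately yields Assumption~\ref{as1}-(c) (continuity in total variation is trivial for a constant kernel) and Assumption~\ref{as1}-(d) (take the reference measure to be $\mu_i=\nu_i$ with density $q_i\equiv 1$). Assumption~\ref{as1}-(b) is included in Assumption~\ref{as4}. For Assumption~\ref{as1}-(a), I would note that the reduced cost
\[
\tilde c(x,\by,\bu)=c(x,\by,\bu)\prod_{i=1}^N f_i(x,\bu^{^{[1:i-1]}},y_i)
\]
is a finite product of non-negative lower semi-continuous functions (the lower semi-continuity of $c$ is built into Assumption~\ref{as4} via Assumption~\ref{as1}-(a), and each $f_i$ is lower semi-continuous by Assumption~\ref{as4}-(a)); since finite products of non-negative lower semi-continuous functions are lower semi-continuous, $\tilde c$ is lsc.

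Next, for Assumption~\ref{as2}, the key bookkeeping step is to check that the set $S_L^i$ defined in Section~\ref{sec2} for the static reduction coincides with the set $S_L^i$ defined in Section~\ref{sec3}. This holds because (i) under the reduction the reference measure $\mu_i$ appearing in the static definition is exactly $\nu_i$, and (ii) for every $\underline{\gamma}$ the dynamic cost $J(\underline{\gamma})$ equals the reduced static cost (this is precisely Witsenhausen's equivalence), so the level sets ${\bf \Gamma}_L$ coincide, and hence so do $S_L$ and its projections. Therefore tightness of $S_L^i$ under the static reduction reduces to Assumption~\ref{as4}-(b).

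With all hypotheses verified, Theorem~\ref{main2} applied to the static reduction produces a randomized optimal policy $\underline{\gamma}^{*}$, and Blackwell's irrelevant information theorem (as already invoked in Theorem~\ref{main2}) then yields a deterministic optimal policy. By the equivalence of costs, this policy is also optimal for the original dynamic team problem. The whole argument is essentially bookkeeping; the only point one must be careful about is confirming that the cost functional and reference measures of the static reduction line up exactly with the definitions used in Section~\ref{sec2}, so that Theorem~\ref{main2} applies verbatim without any modification to its proof.
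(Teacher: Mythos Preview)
Your verification is correct and is exactly the argument the paper implicitly intends; in fact, the paper states Theorem~\ref{main3} without an explicit proof, treating the conclusion as immediate once the static reduction has been set up. Your checks---that the reduced observation kernels are constant (hence Assumption~\ref{as1}-(c),(d) hold with $\mu_i=\nu_i$, $q_i\equiv 1$), that $\tilde c$ is lower semi-continuous as a finite product of non-negative lower semi-continuous functions, and that the sets $S_L^i$ of Section~\ref{sec2} and Section~\ref{sec3} coincide because $J$ is unchanged under the reduction---fill in precisely the routine bookkeeping the paper omits.
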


\section{Witsenhausen's Counterexample}\label{witsen}

In Witsenhausen's celebrated counterexample \cite{Wit68}, depicted in Fig.~\ref{fig1}, there are two decision makers: Agent~$1$ observes a zero mean
and unit variance Gaussian random variable $y_1$ and decides its strategy $u_1$. Agent~$2$ observes $y_2 \coloneqq u_1 + v$, where $v$ is standard (zero
mean and unit variance) Gaussian noise independent of $y_1$, and decides its strategy $u_2$.

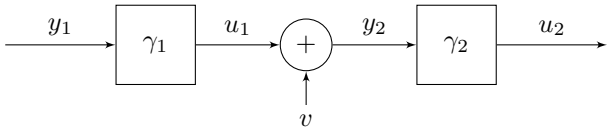
\begin{figure}[h]
\centering
\tikzstyle{int}=[draw, fill=white!20, minimum size=3em]
\tikzstyle{init} = [pin edge={to-,thin,black}]
\tikzstyle{sum} = [draw, circle]
\begin{tikzpicture}[node distance=2cm,auto,>=latex']
    \node [int] (a) {$\gamma_1$};
    \node (b) [left of=a,node distance=2cm, coordinate] {a};
    \node [sum] (d) [right of=a] {$+$};
    \node [int] (c) [right of=d] {$\gamma_2$};
    \node [coordinate] (end) [right of=c, node distance=2cm]{};
    \node (e) [below of=d, node distance=1cm] {$v$};
    \path[->] (e) edge node {} (d);
    \path[->] (b) edge node {$y_1$} (a);
    \path[->] (a) edge node {$u_1$} (d);
    \path[->] (d) edge node {$y_2$} (c);
    \path[->] (c) edge node {$u_2$} (end) ;
\end{tikzpicture}
\caption{Witsenhausen's counterexample.}
\label{fig1}
\end{figure}

The cost function of the team is given by
\begin{align}
c(y_1,u_1,u_2) = l (u_1 - y_1)^2 + (u_2 - u_1)^2, \nonumber
\end{align}
where $l \in \R_{+}$. In this decentralized system, the state of the nature $x$ can be regarded as a
degenerate (constant) random variable. Let $g(y) \coloneqq \frac{1}{\sqrt{2\pi}}\exp{\{-y^2/2\}}$. Then we have
\begin{align}
\rP(y_2 \in S|u_1) = \int_{S} g(y_2-u_1) m(dy_2), \nonumber
\end{align}
where $m$ denotes the Lebesgue measure. Let
\begin{align}
f(u_1,y_2) \coloneqq \exp{\bigl\{-\frac{(u_1)^2-2y_2u_1}{2}\bigr\}} \label{eq20}
\end{align}
so that $g(y_2-u_1) =f(u_1,y_2)\frac{1}{\sqrt{2\pi}}\exp{\{-(y_2)^2/2\}}$. The static reduction proceeds as follows: for any policy $\underline{\gamma}$, we have
\begin{align}
J(\underline{\gamma}) &= \int c(y_1,u_1,u_2) \rP(dy_2|u_1) \delta_{\gamma_1(y_1)}(du_1) \rP_{\sg}(dy_1) \nonumber \\
&= \int c(y_1,u_1,u_2) f(u_1,y_2) \rP_{\sg}(dy_2) \rP_{\sg}(dy_1), \nonumber
\end{align}
where $\rP_{\sg}$ denotes the standard Gaussian distribution. Hence, by defining $\tilde{c}(y_1,y_2,u_1,u_2) = c(y_1,u_1,u_2) f(u_1,y_2)$ and $\widetilde{\rP}(dy_1,dy_2) = \rP_{\sg}(dy_1) \rP_{\sg}(dy_2)$, we can write $J(\underline{\gamma})$ as
\begin{align}
J(\underline{\gamma}) = \int \tilde{c}(y_1,y_2,u_1,u_2) \widetilde{\rP}(dy_1,dy_2). \label{eq4}
\end{align}
Therefore, in the static reduction of Witsenhausen's counterexample, the agents observe independent zero mean and unit variance Gaussian random variables.

To tackle the existence problem for Witsenhausen's counterexample we show that the conditions in Theorem~\ref{main3} hold.

\begin{theorem}\label{witsen_exist}
Witsenhausen's counterexample satisfies conditions in Theorem~\ref{main3}. Hence, there exists an optimal policy.
\end{theorem}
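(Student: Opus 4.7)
The plan is to verify the hypotheses of Theorem~\ref{main3}, namely Assumption~\ref{as4}, for Witsenhausen's counterexample. Several pieces are immediate from the explicit form of the problem: the ``state'' $x$ is degenerate and $\sY_i = \sU_i = \R$ are locally compact, so Assumption~\ref{as1}-(b) holds; the cost $c(y_1, u_1, u_2) = l(u_1-y_1)^2 + (u_2-u_1)^2$ is continuous and hence lower semi-continuous (Assumption~\ref{as1}-(a)); and the static reduction with reference measures $\nu_1 = \nu_2 = \rP_{\sg}$ is already spelled out in (\ref{eq4}), with densities $f_1 \equiv 1$ and $f_2 = f$ as in (\ref{eq20}), both continuous and strictly positive, which settles Assumption~\ref{as3} and Assumption~\ref{as4}-(a).

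Writing $\tilde Q := \widetilde{\rP}\otimes\underline{\gamma}$ for the reduced strategic measure, the substance of the proof is Assumption~\ref{as4}-(b), tightness of $S_L^i$. For Agent~$1$ (assuming $l > 0$), I will use the Gaussian identity $\int f_2(u_1,y_2) \nu_2(dy_2) = 1$ to eliminate $f_2$ from integrands depending only on $(y_1,u_1)$, yielding $\int (u_1 - y_1)^2\, d\tilde Q \leq (J^*+L)/l$; combined with $E_{\tilde Q}[y_1^2]=1$ this gives a uniform second-moment bound on $u_1$ and hence tightness of the $\sU_1$-marginal of $S_L^1$ via Chebyshev. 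Since the $\sY_1$-marginal is $\nu_1 = \rP_{\sg}$, tight by Proposition~\ref{aux2}, Proposition~\ref{aux3} delivers tightness of $S_L^1$.

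The main obstacle is the tightness of $S_L^2$. One cannot simply invoke Theorem~\ref{tight2}-(ii), because the Gaussian factor $\exp(-u_1^2/2 + y_2 u_1)$ in $f_2$ forces $\tilde c = c\cdot f_2$ to decay rather than grow in $|u_1|$ uniformly on compacts in $(y_1, y_2)$, so $\tilde c$ fails the $\mathrm{IC}(\sY,\sU_1)$ condition. I plan to verify tightness of the $\sU_2$-marginal of $S_L^2$ by hand. Fix $\varepsilon > 0$. Using the uniform bound on $E_{\tilde Q}[u_1^2]$ from the preceding paragraph together with $E_{\tilde Q}[y_2^2] = 1$, choose $M$ so that $\tilde Q(B^c) < \varepsilon$ uniformly over ${\bf \Gamma}_L$, where $B := \{(u_1,y_2) : |u_1|\leq M,\ |y_2|\leq M\}$; on $B$ one has $f_2(u_1,y_2) \geq c_M := \exp(-3M^2/2) > 0$. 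For $K > 2M$, on $\{|u_2|>K\} \cap B$ we have $(u_2-u_1)^2 \geq K^2/4$, and the bound $\int (u_2-u_1)^2 f_2\, d\tilde Q \leq J(\underline{\gamma}) \leq J^* + L$ gives
\begin{align}
J^* + L \geq \tfrac{K^2 c_M}{4}\,\tilde Q\bigl(\{|u_2|>K\} \cap B\bigr). \nonumber
\end{align}
Hence $\tilde Q(|u_2|>K) \leq 4(J^*+L)/(K^2 c_M) + \varepsilon$; sending $K \to \infty$ and then $\varepsilon \to 0$ yields the desired uniform tightness, and a final application of Proposition~\ref{aux3} establishes tightness of $S_L^2$. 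All hypotheses of Theorem~\ref{main3} are then met, and the existence of an optimal (deterministic) policy follows.
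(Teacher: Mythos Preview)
Your argument is correct, but it rests on a misreading of Theorem~\ref{tight2} that makes you work much harder than necessary. The hypothesis in Theorem~\ref{tight2}-(ii)(a) asks that the \emph{original} cost $c$ lie in $\mathrm{IC}(\sX \times \sY^{^{[1:j]}} \times \sU^{^{[1:j-1]}},\sU_j)$ for each $j$, not the reduced cost $\tilde c = c\prod_i f_i$. In the proof of Theorem~\ref{tight2} the authors form $\tilde c_j = c\prod_{i\le j} f_i$ and use lower semi-continuity and strict positivity of the $f_i$'s to transfer the IC property from $c$ to $\tilde c_j$; crucially, at step $j$ one never multiplies by $f_{j+1},\ldots,f_N$, so your observation that $c\cdot f_2$ fails $\mathrm{IC}(\sY,\sU_1)$ is true but irrelevant --- for $j=1$ only $f_1\equiv 1$ enters. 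Since $c(y_1,u_1,u_2)=l(u_1-y_1)^2+(u_2-u_1)^2 \ge l(u_1-y_1)^2$ and $c \ge (u_2-u_1)^2$, it is immediate that $c\in\mathrm{IC}(\sY_1,\sU_1)$ and $c\in\mathrm{IC}(\sY_1\times\sY_2\times\sU_1,\sU_2)$ (for $l>0$), and since $f_1,f_2>0$ are continuous, Theorem~\ref{tight2}-(ii) applies directly. That is exactly the paper's proof, which is two lines.

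Your direct route --- exploiting $\int f_2(u_1,y_2)\,\nu_2(dy_2)=1$ to get a uniform second-moment bound on $u_1$, then localizing to a set where $f_2$ is bounded below to control the $u_2$-marginal --- is a legitimate alternative and is more self-contained in that it avoids the IC-class machinery altogether. It also makes the tightness mechanism completely explicit via moment bounds. The cost is that it reproves by hand, in this special case, precisely what Theorem~\ref{tight2} already packages: the passage from $S_{L,1}$ to $S_{L,2}$ in that proof is the abstract version of your ``localize where $f_2 \ge c_M$'' step. So your proof is sound; it is simply longer than needed because the obstacle you identified is not actually present.
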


\begin{proof}
Assumption~\ref{as1}-(a),(b), Assumption~\ref{as3}, and Assumption~\ref{as4}-(a) clearly hold. To prove Assumption~\ref{as4}-(b), we use Theorem~\ref{tight2}. Indeed, it is clear that
$c$ is both in $\mathrm{IC}(\sY_1,\sU_1)$ and in $\mathrm{IC}(\sY_1 \times Y_2 \times \sU_1,\sU_2)$. Hence, by Theorem~~\ref{tight2}, Assumption~\ref{as4}-(b) holds.
\end{proof}

\section{The Gaussian Relay Channel Problem}\label{gaussrelay}

An important dynamic team problem which has attracted interest is the Gaussian relay channel problem \cite{LipsaMartins,zaidi2013optimal} depicted in Fig.~\ref{fig2}. Here, Agent~$1$ observes a noisy version of the state $x$ which has Gaussian distribution with zero mean and variance $\sigma_{x}^2$; that is, $y_1 \coloneqq x + v_0$ where $v_0$ is a zero mean and
variance $\sigma_0^2$ Gaussian noise independent of $x$. Agent~$1$ decides its strategy $u_1$ based on $y_1$. For $i=2,\ldots,N$, Agent~$i$ receives $y_i \coloneqq u_{i-1} + v_{i-1}$ (a noisy version of the decision $u_{i-1}$ of Agent~$i-1$), where $v_{i-1}$ is  zero mean and variance $\sigma_{i-1}^2$ Gaussian noise independent of $\{x,v_{1},\ldots,v_{i-2},v_{i},\ldots,v_{N-1}\}$, and decides its strategy $u_{i}$.

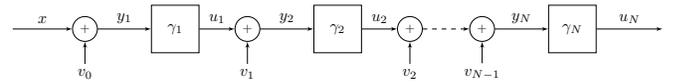
\begin{figure}[h]
\centering
\tikzstyle{int}=[draw, fill=white!20, minimum size=3em]
\tikzstyle{init} = [pin edge={to-,thin,black}]
\tikzstyle{sum} = [draw, circle, scale=0.8]
\scalebox{0.6}{
\begin{tikzpicture}[node distance=2cm,auto,>=latex']
    \node (x) [coordinate, node distance=2cm] {};
    \node [sum] (s0) [right of=x] {$+$};
    \node (v0) [below of=s0, node distance=1cm] {$v_0$};
    \node [int] (c1) [right of=s0] {$\gamma_1$};
    \node [sum] (s1) [right of=c1] {$+$};
    \node (v1) [below of=s1, node distance=1cm] {$v_1$};
    \node [int] (c2) [right of=s1] {$\gamma_2$};
    \node [sum] (s2) [right of=c2] {$+$};
    \node (v2) [below of=s2, node distance=1cm] {$v_2$};
    \node [sum] (s3) [right of=s2] {$+$};
    \node (v3) [below of=s3, node distance=1cm] {$v_{N-1}$};
    \node [int] (c3) [right of=s3] {$\gamma_{N}$};
    \node [coordinate] (end) [right of=c3, node distance=2cm] {};
    \path[->] (x) edge node {$x$} (s0);
    \path[->] (v0) edge node {} (s0);
    \path[->] (s0) edge node {$y_1$} (c1);
    \path[->] (c1) edge node {$u_1$} (s1);
    \path[->] (v1) edge node {} (s1);
    \path[->] (s1) edge node {$y_2$} (c2);
    \path[->] (c2) edge node {$u_2$} (s2);
    \path[->] (v2) edge node {} (s2);
    \path[->,dashed,->] (s2) edge node {} (s3);
    \path[->] (v3) edge node {} (s3);
    \path[->] (s3) edge node {$y_N$} (c3);
    \path[->] (c3) edge node {$u_N$} (end) ;
\end{tikzpicture}}
\caption{Gaussian relay channel.}
\label{fig2}
\end{figure}

The cost function of the team is given by
\begin{align}
c(x,\bu) \coloneqq \bigl( u_N - x \bigr)^2 + \sum_{i=1}^{N-1} l_i \bigl(u_i\bigr)^2, \nonumber
\end{align}
where $l_i \in \R_+$ for all $i=1,\ldots,N-1$. To ease the notation, we simply take $\sigma_x = \sigma_0 = \sigma_1 = \ldots = \sigma_{N-1} =1$.  Recall that
$g(y) \coloneqq \frac{1}{\sqrt{2\pi}} \exp{\{-y^2/2\}}$. Then we have
\begin{align}
\rP(y_1 \in S|x) &= \int_{S} g(y_1-x) m(dy_1)  \nonumber \\
\rP(y_i \in S|u_{i-1}) &= \int_{S} g(y_i-u_{i-1}) m(dy_{i}), \text{  for } i=2,\ldots,N. \nonumber
\end{align}
Recall also that $g(y-u) =f(u,y)\frac{1}{\sqrt{2\pi}}\exp{\{-(y)^2/2\}}$ , where $f(u,y)$ is defined in (\ref{eq20}).
Then, for any policy $\underline{\gamma}$, we have
\begin{align}
J(\underline{\gamma}) &= \int_{\sX \times \sY} c(x,\bu) \rP(dx,d\by) \nonumber \\
&= \int_{\sX\times\sY} c(x,\bu) \biggr[ f(x,y_1) \prod_{i=2}^N f(u_{i-1},y_i) \biggr] \text{ } \rP_{\sg}^{N+1}(dx,d\by), \nonumber
\end{align}
where $\rP_{\sg}^{N+1}$ denotes the product of $N+1$ zero mean and unit variance Gaussian distributions. Therefore, in the static reduction of Gaussian relay channel, we have the components $\tilde{c}(x,\by,\bu) \coloneqq c(x,\bu) \bigr[ f(x,y_1) \prod_{i=2}^N f(u_{i-1},y_i) \bigr]$ and $\widetilde{\rP}(dx,d\by) = \rP_{\sg}^{N+1}(dx,d\by)$. Analogous to Witsenhausen's counterexample, the agents observe independent zero mean and unit variance Gaussian random variables.

\begin{theorem}\label{relay_exist}
The Gaussian relay channel problem satisfies the conditions in Theorem~\ref{main3}. Hence there exists an optimal policy.
\end{theorem}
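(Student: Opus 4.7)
The plan is to verify the hypotheses of Theorem~\ref{main3}, namely Assumption~\ref{as4}, which bundles Assumption~\ref{as1}-(a),(b), Assumption~\ref{as3}, and items (a),(b) of Assumption~\ref{as4} itself. Three of these four components are routine. The underlying spaces are $\sX = \R$ and $\sY_i = \sU_i = \R$, hence locally compact (Assumption~\ref{as1}-(b)), and the cost $c(x,\bu) = (u_N-x)^2 + \sum_{i=1}^{N-1} l_i u_i^2$ is continuous (Assumption~\ref{as1}-(a)). The channels are Gaussian with strictly positive densities $f_i$ with respect to the standard Gaussian reference measure $\rP_{\sg}$, as already exhibited in the static reduction preceding the statement; this takes care of Assumption~\ref{as3}, and since these $f_i$ are continuous they are in particular lower semi-continuous, so Assumption~\ref{as4}-(a) holds.

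The substantive step is Assumption~\ref{as4}-(b), for which I would invoke Theorem~\ref{tight2}(ii). The Gaussian density factor $f(u,y) = \exp\{-(u^2 - 2yu)/2\}$ from (\ref{eq20}) is strictly positive, so $f_j > 0$ for every $j$. It remains to verify that $c$ lies in $\mathrm{IC}(\sX \times \sY^{^{[1:j]}} \times \sU^{^{[1:j-1]}}, \sU_j)$ for every $j = 1,\ldots,N$. For $j \leq N-1$, the quadratic term $l_j u_j^2$ is coercive in $u_j$ alone: given any $M > 0$ and any compact $K$, setting $L = \{u_j \in \R : l_j u_j^2 \leq M\}$ forces $c \geq l_j u_j^2 \geq M$ off $K \times L$, uniformly in all other coordinates. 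For $j = N$, one uses the tracking term $(u_N-x)^2$: given a compact $K \subset \sX \times \sY \times \sU^{^{[1:N-1]}}$, let $C \coloneqq \sup\{|x| : x \in \Proj_{\sX}(K)\}$ and set $L = [-(C+\sqrt{M}), C+\sqrt{M}]$; then $u_N \notin L$ gives $|u_N - x| > \sqrt{M}$ for every admissible $x$, hence $c \geq (u_N-x)^2 > M$ on $K \times L^c$.

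The only mildly delicate point is the $j = N$ case, where the coercive direction $(u_N-x)^2$ couples the decision variable $u_N$ with the state $x$; compactness of $K$ makes this painless since $\Proj_{\sX}(K)$ is compact and hence bounded. All other ingredients mirror the proof of Theorem~\ref{witsen_exist}. Once the IC conditions are established, Theorem~\ref{tight2}(ii) delivers tightness of $S_L^j$ for each $j$, completing the verification of Assumption~\ref{as4}, and Theorem~\ref{main3} then produces an optimal policy for the Gaussian relay channel.
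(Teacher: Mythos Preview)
Your proposal is correct and follows the same route as the paper: verify Assumption~\ref{as1}-(a),(b), Assumption~\ref{as3}, and Assumption~\ref{as4}-(a) directly, then obtain Assumption~\ref{as4}-(b) via Theorem~\ref{tight2}(ii) by checking $f_j>0$ and that $c \in \mathrm{IC}(\sX \times \sY^{^{[1:j]}} \times \sU^{^{[1:j-1]}},\sU_j)$ for every $j$. The paper's proof simply asserts the IC membership as ``clear'', whereas you supply the explicit argument (the quadratic penalty $l_j u_j^2$ for $j<N$, and the tracking term $(u_N-x)^2$ together with boundedness of $\Proj_{\sX}(K)$ for $j=N$); this is exactly the intended verification.
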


\begin{proof}
Assumption~\ref{as1}-(a),(b), Assumption~\ref{as3}, and Assumption~\ref{as4}-(a) clearly hold. To prove Assumption~\ref{as4}-(b), we use Theorem~\ref{tight2}. Indeed, it is clear that
$c$ is in $\mathrm{IC}(\sX \times \sY^{^{[1:j]}} \times \sU^{^{[1:j-1]}},\sU_j)$, for all $j$. Hence, by Theorem~~\ref{tight2}, Assumption~\ref{as4}-(b) holds.
\end{proof}

\section{Conclusion}\label{conc}

Existence of team-optimal policies for both static and dynamic team problems was considered. Under mild technical conditions, we first showed the existence of an optimal policy for static teams. Using this result,  analogous existence results were also established for the dynamic teams via Witsenhausen's static reduction method. Finally, we apply our findings to well-known counterexample of Witsenhausen and Gaussian relay channel.

\bibliographystyle{IEEEtran}
\bibliography{references,SerdarBibliography}


\end{document}